\newtheorem{theorem}{Theorem}
\newtheorem{propos}[theorem]{Proposition}
\newtheorem{lemma}[theorem]{Lemma}
\newenvironment{proof}[1][Proof]{\noindent\textbf{#1.} }{\ \rule{0.5em}{0.5em}}
\def\be{\begin{eqnarray}}
\def\bea{\begin{eqnarray}}
\def\bma{\begin{mathletters}}
\def\ee{\end{eqnarray}}
\def\eea{\end{eqnarray}}
\def\ema{\end{mathletters}}
\def\Tr{{\rm Tr}}
\begin{document}

\title{The decreasing property of relative entropy \\ and the strong superadditivity of quantum channels}

\author{Grigori G. Amosov}
\email{gramos@mail.ru}

\affiliation{Department of Higher Mathematics, Moscow Institute of Physics and Technology,
141700 Dolgoprudny,  Russia}

\author{Stefano Mancini}
\email{stefano.mancini@unicam.it}

\affiliation{Dipartimento di Fisica, Universit\`{a} di Camerino,
62032 Camerino, Italy\\
and INFN, Sezione di Perugia, 06123 Perugia, Italy}

\date{\today}

\begin{abstract}
We argue that a fundamental (conjectured) property of memoryless
quantum channels, namely the strong superadditivity, is
intimately related to the decreasing property of the quantum
relative entropy. Using the latter we first give, for a wide
class of input states, an estimation of the output entropy for
phase damping channels and some Weyl quantum channels. Then we
prove, without any input restriction,  the strong superadditivity
for several quantum channels, including depolarizing quantum
channels, quantum-classical channels and quantum erasure channels.
\end{abstract}

\pacs{03.67.Hk, 89.70.+c}

\maketitle


\section{Introduction}

The apparently simple concept of distinguishability is at the root of information
processing, even at the quantum ground.
For instance, it is rather intuitive that the amount of classical information (symbols encoded
into quantum states) that  can be reliably transmitted through a quantum channel will ultimately
depend upon the ability of the receiver to distinguish different quantum states.
Unlike with classical states, two different quantum states are not necessarily fully
distinguishable. In \cite{Ved02} it was argued that the quantum
relative entropy is the most appropriate quantity to measure
distinguishability between different quantum states.
Hence it could be a powerful tool for investigating quantum channels' properties.
The quantum relative entropy does not
increase under physical processes (completely and trace preserving maps) \cite{Lin75}.
Thus two states can only become less distinguishable
as they undergo any kind of physical transformation.
This result will be central to this paper.

There is a single quantity that completely characterize a quantum channel for transmitting classical
information: its classical capacity \cite{Hol72}. It represents the maximum rate
at which classical symbols can be transmitted through the channel in a reliably way.
It should thus come from the average over a large number (actually infinity) of channel uses.
However, it was conjectured that memoryless channels posses the nice
\emph{additivity property}, that is the classical capacity adds up with the number of channel uses \cite{Hol98, Sch97}.
Hence, it can be simply evaluated by considering one use (one shot)
of the channel, likewise in the classical case due to the Shannon coding theorem.
This has the profound implication that entangled inputs do not matter for the capacity of memoryless
quantum channels.
 The additive property has been proved for a class of quantum channels  \cite{Amo00, Kin02, Amo06a, Amo07b}
 and it was suspected that $l_p$-norms play a crucial role for the global proof.
Unfortunately, recently it has been shown that this is not the case \cite{Hay07}. Thus, the need to devise
 alternative methods.

In reality, the additivity property as discussed above, can be
traced back to the additivity of the minimal output entropy of
two channels. In contrast, when we consider the minimum of the
average output entropies, we are led to the \emph{superadditivity
property}. That is, the minimum of the average output entropies
for the tensor product of two quantum channels is greater than or
equal to the sum of the minima corresponding to the  single
channels. This property was conjectured in \cite{Hol04} and it
turns out to be stronger than the simple additive property. In
fact, if the strong superadditivity property holds, then the
additivity property follows \cite{Hol04}.

Thus, it is of uppermost importance to prove the strong superadditivity
for memoryless quantum channels. Actually, it has only been proved for
entanglement-breaking channels and noiseless channels \cite{Hol04}
and for the quantum depolarizing channel \cite{Amo07a} using different methods.

In the present paper we argue that the strong superadditivity is related to the decreasing
property of the relative entropy. Hence we shall provide a proof of the strong
superadditivity based on the decreasing property of the relative entropy for a class of quantum channels.
This class includes the above channels (thus giving an alternative proof) as well as others ones (thus representing an extension over the already know results).

The layout of the paper is the following. In Section II we recall
some basic notions about quantum relative entropy and classical
capacity of quantum channels. Section III is devoted to formalize
the additivity and the strong superadditivity properties. We give
some estimates of the output entropy for the phase damping
channels and for a subclass of Weyl channels in Section IV and
Section V respectively . Finally, in Section VII we prove the
strong superadditivity for a class of quantum channels without
any restriction on the input states. Section VII is for
conclusions.


\section{Basic Notions}

We start by recalling the definition of the von Neumann entropy of a quantum
system described by a density matrix ${\rho}$ belonging to the set of
states $\mathfrak{S}({\mathcal H})$ (positive unit trace operators)
of the Hilbert space ${\mathcal H}$ of dimension $d<+\infty$,
\begin{eqnarray}
S(\rho) := -\Tr( \rho \log \rho ),
\nonumber
\end{eqnarray}
which can be considered as the proper quantum
analogue of the Shannon entropy \cite{Ohy93}.

Moving on from Shannon relative entropy we can consider the
von Neumann relative entropy as well.
The von Neumann relative entropy
 between  the two states $\sigma$, $\rho$ $\in\mathfrak{S}({\mathcal H})$
 is defined as
\begin{eqnarray}
S(\sigma ||\rho) := \Tr\left[\sigma (\log \sigma - \log \rho)\right].
\nonumber
\end{eqnarray}
Actually, this
quantity was first
considered by Umegaki \cite{Ume62} and it is often referred to it as the Umegaki entropy.
This measure has the same statistical
interpretation as its classical analogue: it tells us how
difficult it is to distinguish the state $\sigma$ from the state
$\rho$ \cite{Hia91}.

Moreover, it has  three simple  properties:
\begin{itemize}

\item[i)] Unitary operations $U$ leave $S(\sigma||\rho)$
invariant, i.e.
$S(\sigma||\rho)=S(U\sigma U^{*}||U \rho U^{*})$. Unitary
transformations represent a change of basis
and the distance between two states should not  change under
this.

\item[ii)] $S(\Tr_p \sigma ||\Tr_p \rho) \le S({\sigma}||\rho)$,
where $\Tr_p$ is a partial trace. Tracing over a part of the system
leads to a loss of information. Hence, the less information we have about two states,
the harder they are to distinguish.

\item[iii)] The relative entropy is additive $S(\sigma_1 \otimes \sigma_2||\rho_1 \otimes
\rho_2)=S(\sigma_1||\rho_1)+S(\sigma_2||\rho_2)$. This inequality is a
consequence of additivity of entropy itself.

\end{itemize}
These properties have profound implication for the quantum states' transformation
(or quantum systems' evolution). In fact the following theorem holds  \cite{Lin75}:

\begin{theorem}[Decreasing property of relative entropy]
For any completely positive, trace preserving map
$\Phi:\mathfrak{S}({\mathcal H})\to\mathfrak{S}({\mathcal H})$
given by $\Phi(\sigma) = \sum_i A_i\sigma A^{*}_i$ such that $\sum
A^{*}_i A_i = 1$, we have
$$S(\Phi(\sigma) ||\Phi(\rho)) \le
S({\sigma}||\rho),$$
with $\sigma$, $\rho$ $\in\mathfrak{S}({\mathcal H})$.
\label{decr}
\end{theorem}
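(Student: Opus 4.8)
The plan is to represent the channel $\Phi$ through a Stinespring dilation and then reduce the statement to the three properties (i)--(iii) already recorded above. First I would introduce an ancilla Hilbert space $\mathcal{H}_E$ with orthonormal basis $\{|i\rangle\}$ indexed by the Kraus labels, fix a reference unit vector $|0\rangle\in\mathcal{H}_E$, and define the isometry $V:\mathcal{H}\to\mathcal{H}\otimes\mathcal{H}_E$ by $V|\psi\rangle=\sum_i A_i|\psi\rangle\otimes|i\rangle$. The trace-preserving condition $\sum_i A_i^* A_i=1$ is precisely the statement $V^*V=1$, so $V$ is a genuine isometry, and one checks directly that $\Phi(\sigma)=\Tr_E\left(V\sigma V^*\right)$ for every state $\sigma$, where $\Tr_E$ is the partial trace over $\mathcal{H}_E$.

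Next I would promote the isometry to a unitary. Since $V$ maps $\mathcal{H}$ isometrically onto a subspace of $\mathcal{H}\otimes\mathcal{H}_E$ of dimension $\dim\mathcal{H}$, it extends to a unitary $U$ on $\mathcal{H}\otimes\mathcal{H}_E$ with $U\left(|\psi\rangle\otimes|0\rangle\right)=V|\psi\rangle$; this is possible because the source subspace $\mathcal{H}\otimes|0\rangle$ and the range of $V$ have equal dimension. Consequently $\Phi(\sigma)=\Tr_E\left(U(\sigma\otimes|0\rangle\langle 0|)U^*\right)$, exhibiting the channel as the composition of three elementary operations: adjoining the fixed ancilla state $|0\rangle\langle 0|$, conjugating by $U$, and tracing out $\mathcal{H}_E$.

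The conclusion then follows by chaining the three properties in the order matching this decomposition. Applying monotonicity under partial trace (property (ii)) to the two states $U(\sigma\otimes|0\rangle\langle 0|)U^*$ and $U(\rho\otimes|0\rangle\langle 0|)U^*$ gives $S(\Phi(\sigma)||\Phi(\rho))\le S\left(U(\sigma\otimes|0\rangle\langle 0|)U^*\,||\,U(\rho\otimes|0\rangle\langle 0|)U^*\right)$; unitary invariance (property (i)) then removes the $U$; and additivity (property (iii)) splits the tensor product, leaving $S(\sigma||\rho)+S(|0\rangle\langle 0|\,||\,|0\rangle\langle 0|)=S(\sigma||\rho)$, since the relative entropy of a state with itself vanishes. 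Assembling these steps yields $S(\Phi(\sigma)||\Phi(\rho))\le S(\sigma||\rho)$.

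I expect the only genuinely delicate ingredient to be the partial-trace step (ii): although it is listed above as a \emph{simple} property, it is in fact the analytic heart of the argument and the point where the convexity of the relative entropy is really invoked, whereas the dilation, the unitary extension, and the additivity bookkeeping are essentially formal. Granting (ii) as stated, the remaining work is a routine verification of the Stinespring identities and the dimension count for the unitary extension.
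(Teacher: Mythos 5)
Your proof is correct and follows essentially the same route as the paper's own argument: the paper justifies Theorem~\ref{decr} by exactly this dilation-plus-monotonicity reasoning (representation as a unitary on an extended space, then properties (i) and (ii)), deferring the rigorous content to the cited work of Lindblad. Your write-up simply fills in what the paper leaves implicit --- the explicit Stinespring isometry, the extension to a unitary, and the use of property (iii) together with $S(|0\rangle\langle 0|\,\|\,|0\rangle\langle 0|)=0$ to dispose of the attached ancilla --- and you correctly identify that all the analytic depth resides in the partial-trace monotonicity (ii).
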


We simply present a physical argument as to why we should expect
this theorem to hold.
A completely positive map (CP-map) can be represented
as a unitary transformation on an extended Hilbert space. According to
i), unitary transformations do not change the relative entropy between
two states. However, after this, we have to perform a partial trace
to go back to the original Hilbert space which, according to ii),
decreases the relative entropy as some information is invariably lost
during this operation. Hence the relative entropy decreases under
any CP-map.

A simple consequence of the fact that
the quantum relative entropy itself does not increase under
CP-maps quantum distinguishability never increases.
Another consequence is that correlations (as measured by the quantum
mutual information) also cannot increase, but now
under {\em local} CP-maps.

In classical information theory the capacity for
communication is given by the mutual information between
sent message and received message \cite{Cov91}. This is intuitively clear, since mutual information
quantifies correlations between sent and received messages and it
thus tells us how faithful the transmission is. If we use quantum
states to encode symbols, then the capacity is not given by the
quantum mutual information, but is given by the so called HSW bound \cite{Hol98, Sch97}.

The linear map $\Phi : \mathfrak{S} ({\mathcal H})\to \mathfrak{S} ({\mathcal H})$ is said to be a
quantum channel if it is completely positive \cite{Hol72}.
Moreover, the quantum channel $\Phi $ is called bistochastic (or unital) if
$\Phi (\frac {1}{d}I_{{\mathcal H}})=\frac {1}{d}I_{{\mathcal H}}$, where $I_{{\mathcal H}}$ is
the identity operator in ${\mathcal H}$.

The HSW bound $C_{1}(\Phi)$ of a
quantum channel $\Phi $ is defined by the formula
\begin{equation}
C_1(\Phi ):=\sup \left[ S\left(\sum \limits
_{j=1}^r\pi_j\Phi (x_j)\right)-\sum \limits _{j=1}^r\pi_jS\left(\Phi (x_j)\right)\right],
\nonumber
\end{equation}
where  the supremum is taken over all probability distributions $\{\pi_j\}_{j=1}^r$
and states ${x_j\in \mathfrak{S} ({\mathcal H})}$.

Notice that
\begin{eqnarray}
&&S\left(\sum \limits_{j=1}^r\pi_j\Phi (x_j)\right)-\sum \limits _{j=1}^r\pi_jS\left(\Phi (x_j)\right)\nonumber\\
&=&\sum \limits_{j=1}^r\pi_jS\left(\Phi (x_j) \,\Big\|\, \sum \limits _{l=1}^r\pi_l\Phi (x_j)\right),\nonumber
\end{eqnarray}
so that we have a direct link to the relative entropy.

The additivity
conjecture states that for any two channels
$\Phi$ and $\Omega $
\begin{equation}
C_1(\Phi \otimes \Omega )=C_1(\Phi )+ C_1(\Omega ).
\nonumber
\end{equation}
If the additivity conjecture holds, one can easily find the
capacity $C(\Phi )$ of the channel $\Phi $ by the formula (see \cite {Hol98})
\begin{equation}
C(\Phi)=\lim \limits _{n\to +\infty } \frac {C_1(\Phi
^{\otimes n})}{n}=C_1(\Phi ).
\nonumber
\end{equation}


\section{The strong superadditivity}

Given a quantum channel $\Phi $ in a Hilbert space ${\mathcal H}$ let us put \cite{Hol04}
\begin {equation}\label {quant}
 H_{\Phi}(\rho):=\min \sum \limits
_{j=1}^{k}\pi _{j}S(\Phi (\rho _{j})),
\end {equation}
where $\rho=\sum \limits _{j=1}^{k}\pi _{j}\rho _{j}$ and
the minimum is taken over all probability distributions $\{\pi _{j}\}_{j=1}^k$ and states $\rho _{j}\in \mathfrak{S}({\mathcal H})$.

The strong superadditivity conjecture for the channel $\Phi$ states that
\begin {equation}\label {strong}
 H_{\Phi \otimes \Omega}(\rho)\ge
H_{\Phi}(\Tr_{{\mathcal K}}(\rho))+ H_{\Omega}(\Tr_{{\mathcal H}}(\rho)),
\end {equation}
with $\rho \in \mathfrak{S}({\mathcal H}\otimes {\mathcal K})$, for an arbitrary quantum channel $\Omega $ in the Hilbert space ${\mathcal K}$.

The infimum of the output entropy of a quantum channel $\Phi $ is
defined by
\begin {equation}\label {addit}
S_{min}(\Phi):=\inf \limits _{\rho\in \mathfrak{S}({\mathcal H})}S(\Phi
(\rho)).\nonumber
\end {equation}
The additivity conjecture for the quantity $S_{min} (\Phi)$ states that \cite{Hol98}
\begin {equation}\label {conj}
S_{min} (\Phi\otimes \Omega)=S_{min} (\Phi)+S_{min} (\Omega)
\end {equation}
for an arbitrary quantum channel $\Omega $. It was shown in \cite{Hol04} that if the strong superadditivity holds, then the additivity follows.
Hence, the conjecture (\ref {strong}) is stronger than
(\ref {conj}).

At first time the additivity property (\ref{conj}) was proved for
quantum depolarizing channel \cite{Kin02}. The method
was based upon the estimation of $l_{p}$-norms of the channel.
Since then, it was suspected that $l_p$-norms play a crucial role for the global proof.
Unfortunately, recently it has been shown that this is not the case \cite{Hay07}. Thus, the need to devise
 alternative methods.


\section{Estimation of the output entropy for the phase damping channel}

Let $\{|e_{s}\rangle\}_{s=0}^{d-1}$ and $\{\lambda _{s}\}_{s=0}^{d-1}$ be an
orthonormal basis in the Hilbert space ${\mathcal H}$ of dimension $d$ and a
probability distribution, respectively. Then, one can introduce the
unitary operator
$$
V:=\sum \limits _{s=0}^{d-1}\exp\left(i\frac {2\pi s}{d}\right)|e_{s}\rangle\langle e_{s}|,
$$
so to define the phase damping channel as
\begin {equation}\label {phasdam}
\Phi (\rho):=\sum \limits _{j=0}^{d-1}\lambda _{j}V^{j}\rho V^{*j},
\end {equation}
where $\rho \in {\mathfrak S}({\mathcal H})$. The numbers $\{\lambda _{s}\}$ give
the {\it spectrum} of the phase damping channel $\Phi$.
Furthermore, the completely positive map defined as
$$
E(\rho):=\frac {1}{d}\sum \limits _{j=0}^{d-1}V^{j}\rho
V^{*j}=\sum \limits _{s=0}^{d-1}|e_{s}\rangle\langle e_{s}|\rho |e_{s}\rangle\langle e_{s}|,
$$
represents the conditional expectation on the
algebra of fixed elements of $\Phi$.

We shall call a pure state $\rho =|f\rangle \langle f| \rangle\in
{\mathfrak S}({\mathcal H})$ \emph{unbiased} with respect to the
basis $\{|e_{s}\rangle\}$ if
\begin {equation}\label {unb}
\Tr(\rho |e_{s}\rangle\langle e_{s}|)=\frac {1}{d},\quad 0\le s\le
d-1. \nonumber
\end {equation}
The above condition is equivalent to the property
\begin {equation}\label {unb2}
|\langle \psi |e_{s}\rangle |=\frac {1}{\sqrt d},\quad 0\le s\le
d-1.
\end {equation}
Notice that if (\ref {unb2}) is satisfied for vectors
$|f\rangle=|f_{j}\rangle$, $0\le j\le d-1$ forming an orthonormal basis in ${\mathcal H}$,
then the bases $\{|f_{j}\rangle\}$ and $\{|e_{s}\rangle\}$ are said to be \emph{mutually
unbiased} \cite{Iva81}.

Let us denote by $\mathcal A$ a convex set of states which can be
represented as a convex linear combination of pure states $\rho
=|f \rangle \langle f |$ being unbiased with respect to the basis
$\{|e_s\rangle\}$ (eigenvectors of the unitary operators
introduced in the definition of the phase damping channel  (\ref
{phasdam})). As a consequence ${\mathcal A}$ is a convex set.
Moreover the following proposition holds.

\begin{propos}\label {fff1}
Suppose that $\rho \in {\mathcal A}$, then for the phase damping
(\ref {phasdam}) we get
$$
H_{\Phi}(\rho )\le -\sum \limits _{j=0}^{d-1}\lambda _{j}\log
\lambda _{j}.
$$
\end{propos}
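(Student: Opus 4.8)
The plan is to find, for each state $\rho \in \mathcal{A}$, an explicit decomposition whose average output entropy achieves the claimed bound, since $H_\Phi(\rho)$ is a minimum over all such decompositions. By definition of $\mathcal{A}$, any $\rho \in \mathcal{A}$ admits a decomposition $\rho = \sum_k \mu_k |f_k\rangle\langle f_k|$ into pure states that are unbiased with respect to the eigenbasis $\{|e_s\rangle\}$ of $V$. I would use exactly this decomposition to estimate $H_\Phi(\rho) \le \sum_k \mu_k S(\Phi(|f_k\rangle\langle f_k|))$, which reduces the problem to a single unbiased pure input.

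The key computation is then the output entropy $S(\Phi(|f\rangle\langle f|))$ for a single unbiased vector $|f\rangle$. First I would write out the matrix elements of $\Phi(|f\rangle\langle f|)$ in the eigenbasis $\{|e_s\rangle\}$. Since $V^j |e_s\rangle = \exp(i 2\pi j s / d)|e_s\rangle$, the action of $\Phi$ multiplies the off-diagonal element $\langle e_s | \rho | e_t\rangle$ by the factor $\sum_j \lambda_j \exp\!\big(i\tfrac{2\pi j(s-t)}{d}\big)$, while leaving the diagonal entries untouched. The unbiasedness condition (\ref{unb2}) forces every diagonal entry $\langle e_s|\,|f\rangle\langle f|\,|e_s\rangle$ to equal $1/d$, so the output state $\Phi(|f\rangle\langle f|)$ has a completely flat diagonal equal to $\frac{1}{d}I$.

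The crucial step is to compare $S(\Phi(|f\rangle\langle f|))$ with the target quantity $-\sum_j \lambda_j \log \lambda_j$. Here I expect the decreasing property of relative entropy (Theorem \ref{decr}) to do the work, which is presumably why it was highlighted. The natural move is to relate the output entropy to a relative entropy against the maximally mixed state $\frac{1}{d}I$ (or against $E(\rho)$, the diagonal part). Since the diagonal of the output is exactly $\frac{1}{d}I = E(\Phi(\rho))$, one has $S(\Phi(\rho)) = \log d - S\!\big(\Phi(\rho)\,\big\|\,\tfrac{1}{d}I\big)$, and I would bound the relative entropy term from below. The idea is that applying the conditional expectation $E$ after $\Phi$ recovers $\frac{1}{d}I$ on the unbiased state, and a suitable application of Theorem \ref{decr} to the channel $\Phi$ (or to an associated dephasing map) should yield an inequality that, after identifying the eigenvalues of $\Phi(\rho)$ with the spectrum $\{\lambda_s\}$ via the Fourier relation above, produces the bound $-\sum_j \lambda_j \log \lambda_j$.

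The main obstacle will be the final step: controlling the entropy of the output state in terms of the channel spectrum $\{\lambda_j\}$. For a generic unbiased input the off-diagonal structure of $\Phi(|f\rangle\langle f|)$ depends on $|f\rangle$, so its eigenvalues need not be exactly $\{\lambda_j\}$; the content of the proposition is the \emph{inequality}, and the delicate point is showing that the output entropy never exceeds $-\sum_j \lambda_j \log \lambda_j$ regardless of which unbiased $|f\rangle$ we pick. I expect this to follow either from a concavity/majorization argument — showing the spectrum of $\Phi(|f\rangle\langle f|)$ is majorized by $\{\lambda_j\}$ — or directly from monotonicity of relative entropy applied with the right choice of reference state, so the careful selection of that reference state and verifying the claimed contraction is where the real work lies.
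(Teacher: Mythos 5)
Your opening reduction coincides with the paper's: since $\rho\in{\mathcal A}$ means precisely that $\rho=\sum_k\mu_k|f_k\rangle\langle f_k|$ with each $|f_k\rangle$ unbiased with respect to $\{|e_s\rangle\}$, the definition of $H_\Phi$ gives $H_\Phi(\rho)\le\sum_k\mu_k S(\Phi(|f_k\rangle\langle f_k|))$, and everything hinges on the output entropy of a single unbiased pure state. But at exactly that point your argument has a genuine gap: you never establish the required bound, offering instead two possible routes (a majorization argument, or monotonicity of relative entropy with a ``right choice of reference state'') without carrying either out. Worse, your diagnosis of the difficulty is mistaken: you claim that for a generic unbiased $|f\rangle$ the eigenvalues of $\Phi(|f\rangle\langle f|)$ ``need not be exactly $\{\lambda_j\}$'' and that the proposition is a delicate inequality. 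In fact the spectrum of $\Phi(|f\rangle\langle f|)$ is \emph{exactly} $\{\lambda_j\}$ for every unbiased $|f\rangle$, and this exact computation is the entire content of the paper's (two-line) proof; no relative entropy monotonicity, majorization, or reference state is needed here (Theorem \ref{decr} enters only in the lower bound of Proposition \ref{estimdep}, not in this upper bound).

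The missing observation is that unbiasedness makes the $V$-orbit of $|f\rangle$ orthonormal. By (\ref{unb2}),
\begin{equation}
\langle f|V^{*j}V^{j'}|f\rangle=\sum_{s=0}^{d-1}|\langle e_s|f\rangle|^2 e^{i\frac{2\pi s(j'-j)}{d}}=\frac{1}{d}\sum_{s=0}^{d-1}e^{i\frac{2\pi s(j'-j)}{d}}=\delta_{jj'},
\nonumber
\end{equation}
so $\{V^j|f\rangle\}_{j=0}^{d-1}$ is an orthonormal family and $\Phi(|f\rangle\langle f|)=\sum_j\lambda_j V^j|f\rangle\langle f|V^{*j}$ is already a spectral decomposition: a mixture of mutually orthogonal pure states with weights $\lambda_j$. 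Hence $S(\Phi(|f_k\rangle\langle f_k|))=-\sum_j\lambda_j\log\lambda_j$ for every $k$, and the proposition follows at once from the decomposition in your first step. Equivalently, your own Fourier computation can be finished: writing $|f\rangle=\frac{1}{\sqrt d}\sum_s e^{i\theta_s}|e_s\rangle$, the output matrix has entries $\frac{1}{d}e^{i(\theta_s-\theta_t)}\hat\lambda(s-t)$ with $\hat\lambda(k)=\sum_j\lambda_j e^{i\frac{2\pi jk}{d}}$; conjugating by the diagonal unitary $\sum_s e^{-i\theta_s}|e_s\rangle\langle e_s|$ removes all dependence on $|f\rangle$ and leaves the circulant matrix $\frac{1}{d}\hat\lambda(s-t)$, whose eigenvalues are $\{\lambda_j\}$. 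The $|f\rangle$-dependence you feared is pure gauge, which is why the bound holds with equality on each pure component.
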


\begin{proof}[Proof Proposition \ref{fff1} ]
Given $\rho \in {\mathcal A}$ we can write it as the convex
linear combination $\rho =\sum \limits _{k}\pi_{k}\rho _{k},\ \rho
_{k}=|f_{k}\rangle \langle f_{k}|\in {\mathcal A}$ such that
$$
S(\rho _{k})=-\sum \limits _{j=0}^{d-1}\lambda _{j}\log \lambda
_{j}.
$$
Thus, the result follows from the definition of $H_{\Phi}(\rho)$.
\hfill\end{proof}

\begin{propos} Suppose that for $\rho \in \mathfrak{S}({\mathcal H}\otimes {\mathcal K})$ the
following inclusion holds,
$$
\Tr_{{\mathcal K}}(\rho )\in {\mathcal A}.
$$
Then,
\begin {equation}\label {Est}
S((\Phi \otimes Id)(\rho ))\ge -\sum \limits _{j=0}^{d-1}\lambda
_{j}\log \lambda _{j}+\frac {1}{d}\sum \limits
_{j=0}^{d-1}S(\rho_{j}),
\nonumber
\end {equation}
where $\rho_{j}=d\,\Tr_{{\mathcal H}}((|e_{j} \rangle\langle e_{j}|\otimes
I_{{\mathcal K}})\rho )\in \mathfrak{S} ({\mathcal K})$. \label{estimdep}
\end{propos}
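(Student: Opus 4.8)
The plan is to compare the channel output $\sigma:=(\Phi\otimes Id)(\rho)$ with its pinching onto the eigenbasis of $V$, namely $\tau:=(E\otimes Id)(\rho)$, and to read off the two terms of the right-hand side from these two states separately. Writing $U_j:=V^j\otimes I_{\mathcal K}$, one has $\sigma=\sum_j\lambda_j U_j\rho U_j^{*}$ and $\tau=\frac1d\sum_j U_j\rho U_j^{*}$. The key structural remark is that $\tau$ is exactly the pinching of $\sigma$, i.e.\ $\tau=(E\otimes Id)(\sigma)$: applying $E\otimes Id$ keeps the diagonal blocks of $\sigma$ in the $\{|e_s\rangle\}$ basis and annihilates the off-diagonal ones, and those diagonal blocks are unchanged by $\Phi$ because the phases carried by $V^j$ cancel there. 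Since $\Tr_{\mathcal K}(\rho)\in\mathcal A$ is unbiased, each diagonal block carries equal weight $\frac1d$, so $\tau=\frac1d\sum_s|e_s\rangle\langle e_s|\otimes\rho_s$ with the $\rho_s$ of the statement; being block-diagonal with orthogonal supports, its entropy is at once $S(\tau)=\log d+\frac1d\sum_s S(\rho_s)$.

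Next I would invoke the pinching identity $S(\sigma)=S(\tau)-S(\sigma\,\|\,\tau)$. This follows directly from the definition of the relative entropy together with the observation that $\log\tau$ shares the block structure of $\tau$, so that only the diagonal blocks of $\sigma$ contribute to $\Tr(\sigma\log\tau)$ and, these coinciding with those of $\tau$, one gets $\Tr(\sigma\log\tau)=\Tr(\tau\log\tau)=-S(\tau)$. The identity furnishes the correct term $S(\tau)$ and reduces the whole proposition to the single upper bound $S(\sigma\,\|\,\tau)\le\log d+\sum_j\lambda_j\log\lambda_j$.

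The hard part is this upper bound, and it is where Theorem \ref{decr} must be used with care: applying it directly to the pair $(\sigma,\tau)$ would \emph{lower}-bound the relative entropy and thus run the wrong way. Instead I would \emph{dilate}, introducing an environment $\mathcal E=\mathbb C^{d}$ that records the Kraus index and forming the classical--quantum states
$$\sigma'=\sum_j\lambda_j\,(U_j\rho U_j^{*})\otimes|j\rangle\langle j|,\qquad \tau'=\frac1d\sum_j (U_j\rho U_j^{*})\otimes|j\rangle\langle j|,$$
whose partial traces over $\mathcal E$ return $\sigma$ and $\tau$. Because $\sigma'$ and $\tau'$ have identical conditional states $U_j\rho U_j^{*}$ and differ only in the classical weights $\lambda_j$ versus $\frac1d$, their relative entropy collapses to the purely classical value $\sum_j\lambda_j\log(d\lambda_j)=\log d+\sum_j\lambda_j\log\lambda_j$. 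Monotonicity under the partial trace $\Tr_{\mathcal E}$ (property ii) then yields $S(\sigma\,\|\,\tau)\le S(\sigma'\,\|\,\tau')=\log d+\sum_j\lambda_j\log\lambda_j$, precisely the bound required.

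Putting the pieces together gives $S(\sigma)=S(\tau)-S(\sigma\,\|\,\tau)\ge\bigl(\log d+\frac1d\sum_j S(\rho_j)\bigr)-\bigl(\log d+\sum_j\lambda_j\log\lambda_j\bigr)=-\sum_j\lambda_j\log\lambda_j+\frac1d\sum_j S(\rho_j)$, which is the claimed estimate. The single genuinely delicate point is the realization that the bound on $S(\sigma\,\|\,\tau)$ must be extracted through the dilation $(\sigma',\tau')$, so that the decreasing property produces an \emph{upper} bound; the remainder is bookkeeping with the block structure and with the unbiasedness hypothesis $\Tr_{\mathcal K}(\rho)\in\mathcal A$.
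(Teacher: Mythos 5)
Your proof is correct, and its skeleton matches the paper's: both arguments rest on (a) the identity $S(\sigma)=S(\tau)-S(\sigma\,\|\,\tau)$ for $\sigma=(\Phi\otimes Id)(\rho)$ and its pinching $\tau=(E\otimes Id)(\rho)$, which the paper obtains through the conditional-expectation property $\tilde E\circ(\Phi\otimes Id)=\tilde E$; (b) the formula $S(\tau)=\log d+\frac{1}{d}\sum_j S(\rho_j)$, which uses the unbiasedness hypothesis $\Tr_{\mathcal K}(\rho)\in{\mathcal A}$ exactly as you do; and (c) an upper bound $S(\sigma\,\|\,\tau)\le\log d+\sum_j\lambda_j\log\lambda_j$ extracted from the decreasing property (Theorem \ref{decr}). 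Where you genuinely diverge is in the implementation of (c). The paper introduces a $\rho$-dependent measure-and-prepare channel $\Xi_\rho(\cdot)=\sum_j\Tr\left((|e_j\rangle\langle e_j|\otimes I_{\mathcal K})\,\cdot\,\right)(V^j\otimes I_{\mathcal K})\rho(V^{*j}\otimes I_{\mathcal K})$ and feeds it the commuting pair $\sum_j\lambda_j|e_j\rangle\langle e_j|\otimes y$ and $\frac{1}{d}I_{\mathcal H}\otimes y$, whose images are precisely $\sigma$ and $\tau$; monotonicity under $\Xi_\rho$ then gives the bound. You instead dilate $\sigma$ and $\tau$ to flagged states on ${\mathcal H}\otimes{\mathcal K}\otimes{\mathcal E}$ and apply monotonicity under $\Tr_{\mathcal E}$. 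The two devices are equivalent — your flagged pair is the image of the paper's commuting pair under the natural flagging refinement of $\Xi_\rho$, and both input pairs carry the same classical relative entropy $\log d+\sum_j\lambda_j\log\lambda_j$ — but yours is slightly more economical: it dispenses with the auxiliary state $y$ and with checking that $\Xi_\rho$ is a legitimate channel, and it makes explicit the point (which the paper leaves implicit) that monotonicity must be applied with $(\sigma,\tau)$ as \emph{outputs} of a CP map, not inputs, since the naive application would bound $S(\sigma\,\|\,\tau)$ from the wrong side.
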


\begin{proof}[Proof Proposition \ref{estimdep} ]
The proof treads \cite {Amo07b} steps.
Let us take $\rho \in \mathfrak{S}({\mathcal H}\otimes {\mathcal K})$ such that
$\Tr_{{\mathcal K}}(\rho )\in {\mathcal A}$ and define
a quantum channel $\Xi _{\rho}:\mathfrak {S}({\mathcal H}\otimes {\mathcal K})\to
\mathfrak {S}({\mathcal H}\otimes {\mathcal K})$ by the formula
$$
\Xi _{\rho}(\sigma):=\sum \limits_{j=0}^{d-1}
\Tr((|e_{j}\rangle\langle e_{j}|\otimes I_{{\mathcal K}})\sigma )(V^{j}\otimes
I_{{\mathcal K}})\rho (V^{*j}\otimes I_{{\mathcal K}}),
$$
with $\sigma \in \mathfrak {S}({\mathcal H}\otimes {\mathcal K}). $ Then, let be
\begin{eqnarray}
\sigma &=& \sum \limits _{j=0}^{d-1}\lambda _{j}|e_{j}\rangle\langle e_{j}|\otimes y,\nonumber\\
\overline \sigma &=& \sum \limits _{j=0}^{d-1}\frac
{1}{d}|e_{j}\rangle\langle e_{j}|\otimes y\equiv\frac {1}{d}I_{{\mathcal H}}\otimes y,\nonumber
\end{eqnarray}
with $y\in \mathfrak {S}({\mathcal K})$ an arbitrary fixed state. It follows
$$
\Xi _{\rho}(\sigma)=(\Phi \otimes Id)(\rho),
$$
$$
\Xi _{\rho}(\overline \sigma)=\frac {1}{d}\sum \limits
_{j=0}^{d-1}(V^{j}\otimes I_{{\mathcal K}})\rho (V^{*j}\otimes I_{{\mathcal K}})\equiv
\tilde E(\overline\sigma).
$$
Here and throughout the paper $Id$ denotes the identity map.
Also notice that $\tilde E=(E\otimes Id)$ is the conditional
expectation to algebra of the elements being fixed with respect to
the action of the cyclic group $\{V^{j}\otimes I_{{\mathcal K}} ,\ 0\le j\le
d-1 \}$.

Now, on the one hand, Theorem \ref{decr} gives us
\begin {equation}\label {EE2}
S\left(\Xi_{\rho}(\sigma)\big\| \Xi _{\rho}(\overline \sigma)\right)\le S(\sigma
\|\overline \sigma)=\sum \limits _{j=0}^{d-1}\lambda
_{j}\log\lambda _{j}+\log d.
\end {equation}

On the other hand, it is
\begin{eqnarray}
S\left(\Xi _{\rho}(\sigma)\big\|\Xi _{\rho}(\overline \sigma)\right)&=&
\Tr((\Phi\otimes Id)(\rho )\log(\Phi \otimes Id)(\rho))\nonumber\\
&&-\Tr((\Phi \otimes Id)(\rho)\log \tilde E(\rho))\nonumber\\
&=&-S((\Phi \otimes Id)(\rho))\nonumber\\
&&-\Tr(\tilde E\circ (\Phi \otimes Id)(\rho)\log \tilde E(\rho))\nonumber\\
&=&-S((\Phi \otimes Id)(\rho))+S(\tilde E(\rho)).
\label {EE1}
\end {eqnarray}
In the above equations, we have used the equality $\tilde E\circ (\Phi \otimes Id)=\tilde E$
which holds because $\tilde E$ is the conditional expectation
to the algebra of elements being fixed with respect to the action
of $\Phi \otimes Id$.

Since
$$
\tilde E(\rho)=\frac {1}{d}\sum \limits
_{j=0}^{d-1}|e_{j}\rangle\langle e_{j}|\otimes \rho_{j},\quad \rho_{j}\in \sigma
({\mathcal K}),
$$
it follows
\begin {equation}\label {EE3}
S(\tilde E(\rho))=\log d+\frac {1}{d}\sum \limits
_{j=0}^{d-1}S(\rho_{j}),
\end {equation}
with $\rho_{j}=d\, \Tr_{{\mathcal H}}((|e_{j}\rangle\langle e_{j}|\otimes I_{{\mathcal K}})\rho),\ 0\le j\le
d-1$. Then, combining (\ref {EE2}), (\ref {EE1}) and (\ref {EE3}) we
get the result of the proposition \ref{estimdep}.
\hfill\end{proof}

\bigskip

We can now single out a wide class (over the totality) of input
states for which the phase damping channels respect
a kind of superadditivity property.

\begin{theorem}
Suppose that $\rho \in \mathfrak{S}({\mathcal H}\otimes {\mathcal K})$ is such that
$$
\Tr_{{\mathcal K}}(\rho )\in {\mathcal A}.
$$
Let  $\Phi$ be the phase damping channel \eqref{phasdam}, then the inequality
\begin{eqnarray}
S((\Phi \otimes \Omega)(\rho))&\ge& -\sum \limits
_{j=0}^{d-1}\lambda _{j}\log\lambda _{j}+ H_{\Omega
}(\Tr_{{\mathcal H}}(\rho))\nonumber\\
&\ge& H_{\Phi}(\Tr_{{\mathcal K}}(\rho ))+H_{\Omega }(\Tr_{{\mathcal
H}}(\rho)),\nonumber
\end{eqnarray}
holds for an arbitrary quantum channel $\Omega :{\mathfrak
S}({\mathcal K})\to {\mathfrak S}({\mathcal K})$.
\label {fff2}
\end{theorem}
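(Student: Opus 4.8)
The plan is to reduce the product-channel bound to the single-channel estimate already established in Proposition \ref{estimdep}. The key factorization is $\Phi \otimes \Omega = (\Phi \otimes Id)\circ(Id \otimes \Omega)$. Accordingly, I would introduce $\tilde\rho := (Id \otimes \Omega)(\rho)$ and apply Proposition \ref{estimdep} to $\tilde\rho$ rather than to $\rho$. The hypothesis required there is $\Tr_{\mathcal{K}}(\tilde\rho)\in {\mathcal A}$, and this holds because $\Omega$ is trace preserving: $\Tr_{\mathcal{K}}(\tilde\rho)=\Tr_{\mathcal{K}}((Id\otimes\Omega)(\rho))=\Tr_{\mathcal{K}}(\rho)\in{\mathcal A}$ by assumption. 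Since $(\Phi\otimes Id)(\tilde\rho)=(\Phi\otimes\Omega)(\rho)$, Proposition \ref{estimdep} gives at once
\[
S((\Phi\otimes\Omega)(\rho))\ge -\sum_{j=0}^{d-1}\lambda_j\log\lambda_j+\frac{1}{d}\sum_{j=0}^{d-1}S(\tilde\rho_j),
\]
where $\tilde\rho_j=d\,\Tr_{\mathcal{H}}((|e_j\rangle\langle e_j|\otimes I_{\mathcal{K}})\tilde\rho)$.

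The substantive step is to recast the last sum in terms of $\Omega$ and $\Tr_{\mathcal{H}}(\rho)$. I expect the intertwining identity $\tilde\rho_j=\Omega(\rho_j)$, where $\rho_j=d\,\Tr_{\mathcal{H}}((|e_j\rangle\langle e_j|\otimes I_{\mathcal{K}})\rho)$ are the states of Proposition \ref{estimdep}; this is the point I would check most carefully, expanding $\rho=\sum_\alpha A_\alpha\otimes B_\alpha$ and using that $Id\otimes\Omega$ acts only on the second factor together with linearity of $\Omega$ to pull $\Omega$ outside the $\mathcal{H}$-trace. Granting this, the bound becomes $-\sum_j\lambda_j\log\lambda_j+\frac{1}{d}\sum_j S(\Omega(\rho_j))$. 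The crucial observation is that $\{\rho_j\}_{j=0}^{d-1}$ with uniform weights $1/d$ constitute an ensemble for $\Tr_{\mathcal{H}}(\rho)$: indeed $\frac{1}{d}\sum_j\rho_j=\Tr_{\mathcal{H}}((\sum_j|e_j\rangle\langle e_j|\otimes I_{\mathcal{K}})\rho)=\Tr_{\mathcal{H}}(\rho)$, using $\sum_j|e_j\rangle\langle e_j|=I_{\mathcal{H}}$. By the definition of $H_\Omega$ as the minimum of the average output entropy over all decompositions, this particular decomposition yields $\frac{1}{d}\sum_j S(\Omega(\rho_j))\ge H_\Omega(\Tr_{\mathcal{H}}(\rho))$, which is exactly the first inequality of the theorem.

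The second inequality then requires no further calculation: since $\Tr_{\mathcal{K}}(\rho)\in{\mathcal A}$, Proposition \ref{fff1} asserts $H_\Phi(\Tr_{\mathcal{K}}(\rho))\le -\sum_j\lambda_j\log\lambda_j$, so replacing the constant by this smaller quantity only weakens the bound. I expect the sole delicate point of the whole argument to be the transformation law $\tilde\rho_j=\Omega(\rho_j)$; once it is secured, the remainder is just the factorization of the product channel, the defining property of $H_\Omega$, and an appeal to Proposition \ref{fff1}. Notably, the proof uses nothing about $\Omega$ beyond complete positivity and trace preservation, which is precisely what allows the conclusion to hold for an arbitrary channel $\Omega$.
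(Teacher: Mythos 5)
Your proposal is correct and follows essentially the same route as the paper: define $\tilde\rho=(Id\otimes\Omega)(\rho)$, apply Proposition \ref{estimdep} to it, identify the resulting conditional states as $\Omega(\rho_j)$ with $\frac{1}{d}\sum_j\rho_j=\Tr_{\mathcal H}(\rho)$ so that the definition of $H_\Omega$ applies, and invoke Proposition \ref{fff1} for the second inequality. If anything, your write-up is more careful than the paper's (which is notationally loose about when $\rho_j$ does or does not include the action of $\Omega$), since you make the intertwining identity $\tilde\rho_j=\Omega(\rho_j)$ and the trace-preservation argument for $\Tr_{\mathcal K}(\tilde\rho)\in\mathcal A$ explicit.
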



\begin{proof}[Proof Theorem \ref{fff2}]
Defining $\tilde \rho :=(Id\otimes \Omega )(\rho )$, we notice
that  $\Tr_{{\mathcal K}}(\tilde \rho )\in {\mathcal A}$ and
\begin {equation}\label {xxx1}
S((\Phi \otimes \Omega )(\rho))=S((\Phi \otimes Id)(\tilde \rho
)). \nonumber
\end {equation}
Applying the Proposition \ref{estimdep} we obtain
\begin{equation}\label {xxx2}
S((\Phi \otimes \Omega )(\rho))\ge -\sum \limits
_{j=0}^{d-1}\lambda _{j}\log \lambda _{j}+\frac {1}{d}\sum
\limits _{j=0}^{d-1}S(\rho_{j}),
\end {equation}
where $\rho_{j}=d\,\Tr_{{\mathcal H}}((|e_{j} \rangle\langle
e_{j}|\otimes I_{{\mathcal K}})(Id \otimes \Omega )(\rho ))\in
\mathfrak{S}({\mathcal K})$. Using Proposition \ref {fff1} we can
rewrite (\ref {xxx2}) as
$$
S((\Phi \otimes \Omega )(\rho))\ge H_{\Phi}(\Tr_{{\mathcal
K}}(\rho ))+\frac {1}{d}\sum \limits _{j=0}^{d-1}S(\Omega (\rho
_{j})).
$$
Finally, taking into account that $\frac {1}{d}\sum \limits
_{j=0}^{d-1}\rho _{j}=\Omega (\Tr_{\mathcal H}(\rho))$, we obtain
$$
\sum \limits _{j=0}^{d-1}S(\Tr_{{\mathcal H}}(\rho _{j}))\ge
H_{\Omega}(\Tr_{{\mathcal H}}(\rho )).
$$
The result of the theorem \ref{fff2} then follows.
\hfill\end{proof}


\section{Estimation of the Output Entropy for the Weyl Channels}

Let us consider an orthonormal basis $|k\rangle,\ k=0,1,\dots ,d-1$ of the Hilbert
space ${\mathcal H}$ of dimension $d$ and define the unitary operators
\begin{equation}
U_{m,n}:=\sum \limits _{k=0}^{d-1}e^{\frac {2\pi i}{d}kn}|k\oplus m\rangle\langle k|,
\label{Wops}
\end{equation}
where $0\le m,n\le d-1$ and $\oplus$ denotes the sum modulus $d$. The operators \eqref{Wops} satisfy the Weyl
commutation relations
\begin {equation}\label {WO}
U_{m,n}U_{m',n'}=e^{2\pi i(m'n-mn')/d}U_{m',n'}U_{m,n},
\nonumber
\end {equation}
hence, we shall call them Weyl
operators. Notice that
\begin {equation}\label {WOA}
U_{m,0}|k\rangle=|k\oplus m \rangle,\quad U_{0,n}|k\rangle=e^{\frac {2\pi i}{d}kn}|k\rangle.
\nonumber
\end {equation}
We shall consider bistochastic quantum channels of
the following form
\begin {equation}\label {Weyl}
\Phi (\rho):=\sum \limits _{m,n=0}^{d-1}\pi _{m,n}U_{m,n}\,\rho \,U_{m,n}^{*},
\end {equation}
where $\{\pi_{m,n}\}_{m,n=0}^{d-1}$ are probability distributions and
$\rho\in \mathfrak{S}({\mathcal H})$ states. The channels (\ref {Weyl}) are called Weyl channels.

\bigskip

Now, let us fix positive numbers $0\le p_{n}, r_{m}  \le 1$,
$1\le n \le d-1$, $0\le m \le d-1$ such that $d\sum \limits
_{n=1}^{d-1}p_{n}+\sum \limits _{m=0}^{d-1}r_{m}=1$ and let us consider
the Weyl channel
\begin {equation}\label {Chan}
\Phi (\rho)=\sum \limits _{m=0}^{d-1}r_{m}U_{m,0}\,\rho \,U_{m,0}^{*}+\sum
\limits _{m=0}^{d-1}\sum \limits _{n=1}^{d-1}p_{n}
U_{m,n}\,\rho \,U_{m,n}^{*},
\end {equation}
$\rho\in \mathfrak{S}({\mathcal H})$.

It is shown in \cite {Amo07b} that the channels (\ref {Chan}) is
covariant with respect to the maximum commutative group of
unitary operators. Moreover, if the dimension of the space $d$ is
a prime number, the following decomposition holds
\begin {equation}\label {Equat}
\Phi (\rho)=\sum \limits _{k=0}^{d-1}\sum \limits
_{m=0}^{d-1}c_{m}U_{m,0}\Psi _{k}(\rho)U_{m,0}^{*},
\end {equation}
where $\rho\in \mathfrak{S}({\mathcal H})$ and
$$
\Psi _{k}(\rho)=\sum \limits _{n=0}^{d-1}\lambda _{n}U_{nk\ mod\
d,n}\,\rho \,U_{nk\ mod\ d,n}^{*},
$$
are phase damping channels. Furthermore, it is
\begin{eqnarray}
\lambda _{0}&=&1-d\sum \limits _{n=1}^{d-1}p_{n},\nonumber\\
\lambda_{n}&=&dp_{n},\quad 1\le n\le d-1,\nonumber\\
c_{m}&=&\frac {r_{m}}{d\left(1-d\sum \limits _{n=1}^{d-1}p_{n}\right)},\quad
0\le m\le d-1.\nonumber
\end{eqnarray}

\bigskip
We can now single out a wide class (over the totality) of  input states for which
the Weyl channels \eqref{Chan} respect a kind of superadditivity property.

Let us denote by $\mathcal A$ the maximum commutative algebra generated by
the projectors $|k\rangle\langle k|,\ 0\le k\le d-1$. Notice that the states $\rho \in {\mathcal A}$
are mutually unbiased with respect to the eigenvectors of the
unitary operators $U_{nk,n},\ 0\le k,n\le d-1$ \cite{Amo07b}.
Then, the following theorem holds.

\begin {theorem}\label {weyl}
Let the dimension $d$ of the space ${\mathcal H}$ be a prime number. Suppose
that $\rho \in {\mathfrak S}({\mathcal H}\otimes {\mathcal K})$ is such that
$$
\Tr_{{\mathcal K}}(\rho)\in {\mathcal A}.
$$
Let  $\Phi$ be the Weyl channel \eqref{Chan}, then the inequality
$$
S((\Phi \otimes \Omega)(\rho))\ge H_{\Phi}(\Tr_{{\mathcal K}}(\rho
))+H_{\Omega }(\Tr_{{\mathcal H}}(\rho)),
$$
holds for an arbitrary quantum channel $\Omega :{\mathfrak
S}({\mathcal K})\to {\mathfrak S}({\mathcal K})$.
\end {theorem}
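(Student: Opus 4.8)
The plan is to reduce the Weyl channel \eqref{Chan} to the phase damping channels already controlled by Theorem \ref{fff2}, exploiting the prime-dimension decomposition \eqref{Equat}. As in the opening of the proof of Theorem \ref{fff2}, I would first absorb $\Omega$ into the state by setting $\tilde\rho:=(Id\otimes\Omega)(\rho)$, so that $S((\Phi\otimes\Omega)(\rho))=S((\Phi\otimes Id)(\tilde\rho))$ while $\Tr_{\mathcal K}(\tilde\rho)=\Tr_{\mathcal K}(\rho)\in{\mathcal A}$ is preserved. Inserting \eqref{Equat} then gives
\begin{equation}
(\Phi\otimes Id)(\tilde\rho)=\sum\limits_{k=0}^{d-1}\sum\limits_{m=0}^{d-1}c_m\,(U_{m,0}\otimes I_{\mathcal K})(\Psi_k\otimes Id)(\tilde\rho)(U_{m,0}^{*}\otimes I_{\mathcal K}),\nonumber
\end{equation}
which expresses the output as a mixture (with $\sum\limits_m c_m=1/d$ for each $k$) of unitarily rotated outputs of the phase damping channels $\Psi_k$.

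The engine would be the decreasing property of relative entropy (Theorem \ref{decr}), applied component by component exactly as in Proposition \ref{estimdep}. For each $k$ the estimate there needs only the spectrum $\{\lambda_n\}$ and the conditional expectation $\tilde E_k$ onto the fixed algebra of $\Psi_k\otimes Id$, and by the remark preceding the theorem the diagonal marginal $\Tr_{\mathcal K}(\tilde\rho)$ is unbiased with respect to the eigenvectors of $U_{nk,n}$ for every $k\neq0$; this is precisely the hypothesis that makes Proposition \ref{estimdep}, and hence Theorem \ref{fff2}, applicable to $\Psi_k$. Combining the per-component bounds with the unitary invariance of the entropy under $U_{m,0}\otimes I_{\mathcal K}$ and with $\sum\limits_m c_m=1/d$ would then assemble a lower bound on $S((\Phi\otimes Id)(\tilde\rho))$ in terms of $-\sum\limits_n\lambda_n\log\lambda_n$ together with residual entropies that should organize into $H_\Omega(\Tr_{\mathcal H}(\rho))$.

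The hard part will be making the combination tight enough to reach $H_\Phi(\Tr_{\mathcal K}(\rho))$ rather than merely $-\sum\limits_n\lambda_n\log\lambda_n$. For the Weyl channel the latter is the minimal output entropy, attained only on the orbit of an unbiased vector, whose mixtures produce just the maximally mixed marginal; for a diagonal $\Tr_{\mathcal K}(\rho)$ that is not maximally mixed one has $H_\Phi(\Tr_{\mathcal K}(\rho))>-\sum\limits_n\lambda_n\log\lambda_n$, so a naive concavity estimate across \eqref{Equat} is too lossy and by itself cannot recover the claimed right-hand side. A related subtlety is the degenerate component $k=0$, whose eigenbasis is the computational basis itself: there the diagonal input is not unbiased, $\Psi_0$ fixes it, and Theorem \ref{fff2} does not apply directly. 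I expect the resolution to be to run the auxiliary-channel argument of Proposition \ref{estimdep} directly on $\Phi\otimes Id$, with reference states and a conditional expectation assembled from the full prime-$d$ (mutually unbiased) decomposition, so that the spectral term sharpens precisely to $H_\Phi(\Tr_{\mathcal K}(\rho))$ while the residual term yields $H_\Omega(\Tr_{\mathcal H}(\rho))$; constructing that conditional expectation, given that the Weyl channel's own fixed-point algebra is trivial, is the crux of the proof.
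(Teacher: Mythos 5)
Your proposal is not a completed proof, and you say so yourself: everything is made to hinge on a final construction --- an auxiliary channel and conditional expectation in the spirit of Proposition \ref{estimdep}, but built for $\Phi\otimes Id$ itself out of the full mutually unbiased structure --- that would sharpen the spectral term $-\sum_{n}\lambda_{n}\log\lambda_{n}$ into $H_{\Phi}(\Tr_{{\mathcal K}}(\rho))$ and absorb the degenerate $k=0$ component, and this construction is never exhibited. You correctly note that the obvious route is blocked (for generic parameters the fixed-point algebra of \eqref{Chan} is trivial, so there is no conditional expectation available to play the role of $\tilde E$), but you supply no substitute. What your argument actually delivers is, at best, $S((\Phi\otimes\Omega)(\rho))\ge -\sum_{n}\lambda_{n}\log\lambda_{n}+H_{\Omega}(\Tr_{{\mathcal H}}(\rho))$, and only from the components $k\ne 0$; by your own (correct) observation this is strictly weaker than the claimed inequality whenever the diagonal marginal $\Tr_{{\mathcal K}}(\rho)$ is not maximally mixed. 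So, judged as a proof, the proposal has a genuine gap, located exactly where you said it is.

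You should know, however, that the paper's own proof is precisely the naive argument you reject. It consists of two lines: concavity of the entropy across \eqref{Equat} (using $\sum_{m}c_{m}=1/d$ and unitary invariance) gives $S((\Phi\otimes\Omega)(\rho))\ge\frac{1}{d}\sum_{k=0}^{d-1}S((\Psi_{k}\otimes\Omega)(\rho))$, after which Theorem \ref{fff2} is ``applied to each term.'' Both of your objections hit that argument squarely. First, for $k=0$ the unitaries $U_{0,n}$ are diagonal, so $\Psi_{0}$ is a phase damping channel in the very basis that generates ${\mathcal A}$; a diagonal marginal is then a fixed point of $\Psi_{0}$, not unbiased with respect to it, and the per-term bound can simply fail: for $\rho=|0\rangle\langle 0|\otimes\tau$ with $\tau$ pure and $\Omega=Id$ (a state satisfying the theorem's hypothesis) one has $S((\Psi_{0}\otimes Id)(\rho))=0$, whereas the first inequality of Theorem \ref{fff2} would give the lower bound $-\sum_{n}\lambda_{n}\log\lambda_{n}>0$ for a nondegenerate spectrum. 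The paper's preceding assertion that states of ${\mathcal A}$ are unbiased with respect to the eigenvectors of $U_{nk,n}$ for \emph{all} $0\le k\le d-1$ is false at $k=0$. Second, even granting all $d$ terms, averaging yields only $-\sum_{n}\lambda_{n}\log\lambda_{n}+H_{\Omega}(\Tr_{{\mathcal H}}(\rho))$, and your claim that $H_{\Phi}(\Tr_{{\mathcal K}}(\rho))$ can strictly exceed $-\sum_{n}\lambda_{n}\log\lambda_{n}$ is correct: for $d=2$, $r_{0}=r_{1}=(1-2p)/2$, $0<p<1/4$, the channel \eqref{Chan} maps the Bloch vector $(x,y,z)$ to $((1-4p)x,0,0)$, its minimal output entropy equals $-\sum_{n}\lambda_{n}\log\lambda_{n}$ and is attained only at the two eigenvectors of $U_{1,0}$, so every decomposition of a diagonal state with nonzero Bloch $z$-component has average output entropy strictly above that value. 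In short, the gap you identified is real, but it is not a gap between you and the paper: it is a gap in the published proof itself, which contains neither of the repairs you were looking for.
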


\begin{proof}[Proof Theorem \ref{weyl}]
Using the decomposition (\ref {Equat}) we easily arrive at
\begin {equation}\label {lll}
S((\Phi \otimes \Omega )(\rho))\ge \frac {1}{d}\sum \limits
_{k=0}^{d-1}S((\Psi _{k}\otimes \Omega )(\rho )). \nonumber
\end {equation}
Then, by applying Theorem \ref {fff2} to each term of the right hand side
of (\ref {lll}) we obtain the result of Theorem \ref{weyl}.
\hfill\end{proof}


\section{Quantum Channels Respecting the Strong Superadditivity}

We shall provide hereafter a class of quantum channels that fully respect the strong superadditivity, i.e. without any restriction on the input states.

\subsection {The quantum noiseless channel}

The quantum noiseless channel in the
Hilbert space ${\mathcal H}$ of the dimension $d$
is simply defined as the identity operation
\begin{equation}
\Phi(\rho):=Id(\rho)=\rho,
\label{nc}
\end{equation}
with $\ \rho\in \mathfrak{S} ({\mathcal H})$.

\begin{theorem}
Let $\Phi$ be the quantum noiseless channel of Eq.(\ref{nc}), then the inequality
\begin {equation}
 H_{\Phi\otimes \Omega}(\rho)\ge  H_{\Omega}(\Tr_{{\mathcal H}}(\rho)),
 \nonumber
\end {equation}
 holds for an
arbitrary quantum channel
$\Omega :{\mathfrak
S}({\mathcal K})\to {\mathfrak S}({\mathcal K})$.
\label{Identity}
\end{theorem}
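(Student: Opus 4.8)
The plan is to recognize the stated inequality as the strong superadditivity bound \eqref{strong} specialized to $\Phi=Id$. Since the identity channel sends every pure state to a pure state, $H_{\Phi}(\Tr_{{\mathcal K}}(\rho))=0$, so it suffices to prove $H_{Id\otimes\Omega}(\rho)\ge H_{\Omega}(\Tr_{{\mathcal H}}(\rho))$. First I would reduce to pure-state ensembles: because $\rho_j\mapsto S((Id\otimes\Omega)(\rho_j))$ is concave (von Neumann entropy is concave and the channel is linear), splitting any mixed $\rho_j$ into its pure components only lowers the averaged output entropy, so the minimum defining $H_{Id\otimes\Omega}(\rho)$ is attained on a decomposition $\rho=\sum_j\pi_j|\psi_j\rangle\langle\psi_j|$ with each $\rho_j$ pure. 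Writing $\sigma_j:=\Tr_{{\mathcal H}}(|\psi_j\rangle\langle\psi_j|)\in\mathfrak{S}({\mathcal K})$, the family $\{\pi_j,\sigma_j\}$ is an ensemble for $\Tr_{{\mathcal H}}(\rho)$.

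The core computation is to evaluate a single term $S((Id\otimes\Omega)(|\psi_j\rangle\langle\psi_j|))$. Introducing a Stinespring isometry $W:{\mathcal K}\to{\mathcal K}\otimes{\mathcal E}$ for $\Omega$, the vector $|\Psi_j\rangle:=(I_{{\mathcal H}}\otimes W)|\psi_j\rangle$ is pure on ${\mathcal H}\otimes{\mathcal K}\otimes{\mathcal E}$ and $(Id\otimes\Omega)(|\psi_j\rangle\langle\psi_j|)=\Tr_{{\mathcal E}}|\Psi_j\rangle\langle\Psi_j|$. Purity then forces the output entropy to equal that of the environment marginal, which is precisely $S(\tilde\Omega(\sigma_j))$, where $\tilde\Omega$ denotes the channel complementary to $\Omega$. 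I stress that at this point one cannot replace $S(\tilde\Omega(\sigma_j))$ by $S(\Omega(\sigma_j))$ term by term: for $\Omega=Id$ the left-hand side vanishes while $S(\Omega(\sigma_j))=S(\sigma_j)$ can be positive. The argument must therefore route through the complementary channel rather than bound each term by an output entropy of $\Omega$ directly.

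To finish I would descend to pure states of ${\mathcal K}$, where $\Omega$ and $\tilde\Omega$ agree. Decomposing $\sigma_j=\sum_k c_{jk}|\phi_{jk}\rangle\langle\phi_{jk}|$ and using concavity of $S(\tilde\Omega(\cdot))$ gives $S(\tilde\Omega(\sigma_j))\ge\sum_k c_{jk}S(\tilde\Omega(|\phi_{jk}\rangle\langle\phi_{jk}|))$, while for each pure $|\phi_{jk}\rangle$ the Schmidt decomposition of $W|\phi_{jk}\rangle$ yields the identity $S(\tilde\Omega(|\phi_{jk}\rangle\langle\phi_{jk}|))=S(\Omega(|\phi_{jk}\rangle\langle\phi_{jk}|))$. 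Averaging over $j$ with weights $\pi_j$ assembles the ensemble $\{\pi_j c_{jk},|\phi_{jk}\rangle\langle\phi_{jk}|\}$ of $\Tr_{{\mathcal H}}(\rho)$, so that $\sum_j\pi_j S(\tilde\Omega(\sigma_j))\ge H_{\Omega}(\Tr_{{\mathcal H}}(\rho))$ by the very definition \eqref{quant} of $H_\Omega$. Chaining the pure-ensemble reduction, the per-term identity, and this last estimate then gives the claim.

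The main obstacle I anticipate is the pure-input identity $S((Id\otimes\Omega)(|\psi\rangle\langle\psi|))=S(\tilde\Omega(\sigma))$: this is the step that genuinely exploits the noiselessness of $\Phi$, since the first factor contributes no environment, and it is what makes the passage to the complementary channel both available and necessary. The remaining ingredients — the reduction to pure ensembles, concavity, and the Schmidt identity $S(\Omega(\text{pure}))=S(\tilde\Omega(\text{pure}))$ — are standard, but I would be careful to keep the complementary channel in play throughout, since the naive term-by-term comparison with $\Omega$ is false.
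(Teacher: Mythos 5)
Your proof is correct, but it follows a genuinely different route from the paper. The paper's own proof recycles its phase-damping machinery: it views the identity channel as a phase damping channel with degenerate spectrum $\lambda_0=1$, $\lambda_j=0$ ($1\le j\le d-1$), chooses for each member $\rho_k$ of the optimal ensemble a basis (the Fourier dual of the eigenbasis of $\Tr_{\mathcal K}(\rho_k)$) with respect to which $\Tr_{\mathcal K}(\rho_k)\in\mathcal A$, and then invokes Theorem \ref{fff2} — whose proof is where the decreasing property of relative entropy (Theorem \ref{decr}) actually enters, via the auxiliary channel $\Xi_\rho$ and the conditional expectation $\tilde E$ — finishing with convexity of the convex-roof quantity $H_\Omega$. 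You instead bypass relative entropy entirely: pure-ensemble reduction by concavity, Stinespring dilation, the purity identity $S\bigl((Id\otimes\Omega)(|\psi\rangle\langle\psi|)\bigr)=S\bigl(\tilde\Omega(\sigma)\bigr)$ with $\sigma=\Tr_{\mathcal H}(|\psi\rangle\langle\psi|)$, and the Schmidt symmetry $S(\Omega(\phi))=S(\tilde\Omega(\phi))$ on pure $\phi$, then the definition \eqref{quant}. Your route is self-contained and closer to the standard complementary-channel argument of \cite{Hol04}; it also sidesteps the unbiased-basis subtlety that the paper treats rather tersely (not every state lies in $\mathcal A$ for a fixed basis — one must choose the basis per ensemble member, which the paper leaves implicit). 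What the paper's route buys is thematic unity — everything flows from the monotonicity of relative entropy — and reuse of Theorem \ref{fff2}, which is the engine behind all its other superadditivity results. Your observation that the naive term-by-term bound via $S(\Omega(\sigma_j))$ fails, forcing the detour through $\tilde\Omega$, is exactly the right point of care; the only cosmetic remark is that you need not claim the minimum in $H_{Id\otimes\Omega}$ is \emph{attained} on pure ensembles — it suffices that pure refinements never increase the average output entropy, so the infimum over pure ensembles coincides with the infimum over all ensembles.
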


\begin{proof}[Proof Theorem \ref{Identity}]
Actually this theorem was proved in \cite {Hol04}. Our prove is
alternative and based upon the decreasing property of the
relative entropy. Let us take the optimal ensemble $\{\rho
_{k}\}$ such that
$$
H_{\Phi\otimes \Omega}(\rho)=\sum \limits _{k}\pi _{k}S((\Phi
\otimes \Omega)(\rho _{k})).
$$
Given a state $\rho _{k}\in {\mathfrak S}({\mathcal H}\otimes
{\mathcal K})$, the identity channel can be considered as the
phase damping channel $\Psi $ with the spectrum $\lambda _{0}=1,\
\lambda _{j}=0,\ 1\le j\le d-1$, for which the state
$\Tr_{{\mathcal K}}(\rho)\in {\mathcal A}$, where $\mathcal A$ is
the convex set generated by pure states unbiased with respect to
the basis of eigenvectors of the unitary operator determining
$\Psi $. Hence, the result follows from Theorem \ref {fff2}.
\hfill\end{proof}


\subsection {The quantum-classical channel}

Let $\{ M_{j},\ 1\le j\le d\}$ be a resolution of the identity
in ${\mathcal H}$ consisting of positive operators $ M_{j}>0,\ \sum
\limits _{j=1}^{d} M_{j}=I_{{\mathcal H}}$. The quantum channel $\Phi $
is said to be a quantum-classical channel (shortly q-c channel) if there exists an orthogonal
basis $\{|e_{j}\rangle\}$ in ${\mathcal H}$ such that \cite{Hol04}:
\begin {equation}\label {qc}
\Phi ( \rho)=\sum \limits _{j=1}^{d}\Tr( M_{j}
\rho)|e_{j}\rangle\langle e_{j}|.
\end {equation}

\begin{theorem}
Let $\Phi$ be the q-c
channel (\ref{qc}), then the inequality
\begin {equation}
 H_{\Phi\otimes \Omega}(\rho)\ge  H_{\Phi}(\Tr_{{\mathcal K}}(\rho))+ H_{\Omega}(\Tr_{{\mathcal H}}(\rho)),\nonumber
 \label{strongqc}
\end {equation}
 holds for an
arbitrary quantum channel $\Omega :{\mathfrak
S}({\mathcal K})\to {\mathfrak S}({\mathcal K})$.
\label{theoqc}
\end{theorem}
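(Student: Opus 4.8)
The plan is to prove the strong superadditivity for the q-c channel by reducing it to the already-established machinery for phase damping channels, exploiting the fact that a q-c channel produces diagonal output in a fixed basis. First I would observe that the output $\Phi(\rho)=\sum_j \Tr(M_j\rho)|e_j\rangle\langle e_j|$ is always supported on the commutative algebra generated by the projectors $|e_j\rangle\langle e_j|$. In particular $\Tr_{\mathcal K}((\Phi\otimes\Omega)(\rho))$ is a diagonal state with respect to $\{|e_j\rangle\}$, so it lies in the convex set $\mathcal A$ of states unbiased to the dual basis; this is exactly the structural hypothesis that Theorem~\ref{fff2} requires. The idea is therefore to realize $\Phi$, or a suitable factor of it, as a phase damping channel (or a composition involving one) so that Theorem~\ref{fff2} applies without any restriction on the input $\rho$.

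The key steps, in order, would be the following. First I would factor the q-c channel as $\Phi = E\circ\Phi$, where $E(\sigma)=\sum_s |e_s\rangle\langle e_s|\sigma|e_s\rangle\langle e_s|$ is the pinching (conditional expectation) onto the diagonal algebra, using that $\Phi$ already has diagonal range so that $E$ acts trivially after $\Phi$. Second, since $E$ is itself the $\lambda_s=1/d$ phase damping channel (or rather its completely dephasing limit), I would write $S((\Phi\otimes\Omega)(\rho))=S((E\otimes Id)\circ(\Phi\otimes\Omega)(\rho))$ and apply the estimate of Proposition~\ref{estimdep} to the state $(\Phi\otimes\Omega)(\rho)$, whose first marginal automatically lies in $\mathcal A$. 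Third, I would identify the resulting lower bound: the $-\sum_j\lambda_j\log\lambda_j$ term should reproduce $H_\Phi(\Tr_{\mathcal K}(\rho))$ for the q-c channel (whose minimal output entropy over the diagonal ensemble is computed directly from the spectrum of $\Phi$), while the $\frac1d\sum_j S(\rho_j)$ term is bounded below by $H_\Omega(\Tr_{\mathcal H}(\rho))$ exactly as in the final lines of the proof of Theorem~\ref{fff2}. Finally, passing to the optimal decomposition $\rho=\sum_k\pi_k\rho_k$ realizing $H_{\Phi\otimes\Omega}(\rho)$ and applying the single-state bound to each $\rho_k$, convexity of the entropy terms delivers the claimed inequality.

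The hard part will be verifying that the output-entropy quantity $H_\Phi(\Tr_{\mathcal K}(\rho))$ for the q-c channel is genuinely captured by a phase-damping-type spectrum term, since a general q-c channel is specified by an arbitrary POVM $\{M_j\}$ rather than by a single unitary-generated spectrum as in \eqref{phasdam}. Unlike the phase damping case, where the minimal average output entropy equals $-\sum_j\lambda_j\log\lambda_j$ by Proposition~\ref{fff1}, here $H_\Phi$ depends on the interplay between the POVM elements $M_j$ and the diagonal basis, so I expect the main obstacle to be showing that restricting to the appropriate ensemble does not lose tightness and that the bound produced by the dephasing argument matches $H_\Phi(\Tr_{\mathcal K}(\rho))$ rather than something strictly smaller. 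One likely route around this is to note that because the range of $\Phi$ is entirely diagonal, the channel is entanglement-breaking, so $H_\Phi$ is attained on product states and the additivity/superadditivity of $H_\Phi$ on its own can be invoked or re-derived, letting the dephasing estimate supply only the cross term coupling $\Phi$ and $\Omega$.
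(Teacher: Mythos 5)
There is a genuine gap at the central step of your reduction, and it is fatal as stated. You claim that because $\Tr_{{\mathcal K}}((\Phi\otimes\Omega)(\rho))=\Phi(\Tr_{{\mathcal K}}(\rho))$ is diagonal in $\{|e_j\rangle\}$, it ``lies in the convex set ${\mathcal A}$'' required by Proposition \ref{estimdep}/Theorem \ref{fff2}. But the set ${\mathcal A}$ relevant to those statements is the convex hull of pure states \emph{unbiased} with respect to the dephasing basis of the phase damping channel being used. Your phase damping channel is the pinching $E(\sigma)=\sum_s|e_s\rangle\langle e_s|\sigma|e_s\rangle\langle e_s|$, whose defining unitary has eigenbasis $\{|e_j\rangle\}$; hence the hypothesis you need is unbiasedness with respect to $\{|e_j\rangle\}$ itself, i.e.\ all diagonal entries equal to $\frac{1}{d}$ (condition (\ref{unb2}) survives convex combinations at the level of diagonals). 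A generic diagonal state $\sum_j q_j|e_j\rangle\langle e_j|$ is therefore \emph{not} in this ${\mathcal A}$: ``diagonal in the basis'' and ``unbiased with respect to the basis'' are essentially opposite conditions. (Diagonal states are unbiased with respect to the Fourier-conjugate bases, which is what Section V exploits, but the pinching that fixes the q-c output dephases in $\{|e_j\rangle\}$, not in a conjugate basis, so that observation does not help here.) The failure is not cosmetic: if Proposition \ref{estimdep} could be applied with $E$ (spectrum $\lambda_s=\frac{1}{d}$), its first term would be $-\sum_s\frac{1}{d}\log\frac{1}{d}=\log d$, and you would conclude $S((\Phi\otimes\Omega)(\rho))\ge\log d+\cdots$, which is false: a q-c channel whose POVM elements $M_j$ are close to the projectors $|e_j\rangle\langle e_j|$ has output entropy close to $0$. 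For the same reason, the spectrum term $\log d$ cannot ``reproduce $H_{\Phi}(\Tr_{{\mathcal K}}(\rho))$'' as your third step hopes. Finally, your fallback --- invoke that q-c channels are entanglement breaking and that superadditivity holds for such channels --- is circular in this context: that is exactly the statement of Theorem \ref{theoqc} (the paper itself notes q-c channels are a special case of the entanglement-breaking channels of \cite{Hol04}), and the whole point is to reprove it by the relative-entropy method.

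For comparison, the paper's proof does not pass through phase damping at all and needs no unbiasedness hypothesis. It proves a bespoke Lemma (\ref{lemqc}): for any $\rho\in\mathfrak{S}({\mathcal H}\otimes{\mathcal K})$, set $\lambda_j=\Tr(M_j\Tr_{{\mathcal K}}(\rho))$ and $\rho_j=\lambda_j^{-1}\Tr_{{\mathcal H}}((M_j\otimes I_{{\mathcal K}})\rho)$, and introduce the channel $\Sigma_{\rho}(\sigma)=\sum_j\langle e_j|\sigma|e_j\rangle\,|e_j\rangle\langle e_j|\otimes\rho_j$. The exact identity $\Sigma_{\rho}(\Phi(\Tr_{{\mathcal K}}(\rho)))=(\Phi\otimes Id)(\rho)$ is where the q-c structure enters, playing the role that unbiasedness plays in the phase damping argument. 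Applying the decreasing property of relative entropy to the pair $\Phi(\Tr_{{\mathcal K}}(\rho))$ and $\frac{1}{d}I_{{\mathcal H}}$, and computing both sides, gives $S((\Phi\otimes Id)(\rho))\ge S(\Phi(\Tr_{{\mathcal K}}(\rho)))+\sum_j\lambda_jS(\rho_j)$ with no restriction on $\rho$; the theorem then follows by applying this bound to each member $(Id\otimes\Omega)(\rho_j)$ of the optimal ensemble and using the definition (\ref{quant}) of $H$. If you want to salvage your strategy, the lesson is that for q-c channels the reference pair fed into the data-processing inequality must be channel-adapted (the diagonal output state versus the maximally mixed state, transported by $\Sigma_{\rho}$), rather than the unbiased-input pair that drives the phase damping estimates.
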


To prove the theorem we need of the following lemma.

\begin{lemma}
Let $\Phi $ be the q-c channel (\ref {qc}).
Then, given a state $ \rho \in \mathfrak{S}({\mathcal H}\otimes {\mathcal K})$, it is
\begin{equation}
S((\Phi \otimes Id)( \rho))\ge S(\Phi (\Tr_{{\mathcal K}}( \rho
)))+\sum \limits _{j=1}^{d}\lambda _{j}S( \rho _{j}),
\nonumber
\end{equation}
where $\lambda _{j}=\Tr( M_{j}\Tr_{{\mathcal K}}( \rho)),\  \rho
_{j}=\frac {1}{\lambda _{j}}\Tr_{{\mathcal H}}(( M_{j}\otimes I_{{\mathcal K}})
\rho )\in \mathfrak{S}({\mathcal K})$.
\label{lemqc}
\end{lemma}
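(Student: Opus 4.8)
The plan is to express the left-hand side $S((\Phi\otimes Id)(\rho))$ in a form that naturally separates into a classical entropy of the measurement outcomes and a conditional contribution, then match these pieces to the two terms on the right. First I would compute the output state explicitly. Since $\Phi$ acts by measuring with $\{M_j\}$ and preparing $|e_j\rangle\langle e_j|$, one finds
\begin{eqnarray}
(\Phi\otimes Id)(\rho)=\sum_{j=1}^{d}|e_j\rangle\langle e_j|\otimes \Tr_{{\mathcal H}}((M_j\otimes I_{{\mathcal K}})\rho)=\sum_{j=1}^{d}\lambda_j\,|e_j\rangle\langle e_j|\otimes\rho_j,\nonumber
\end{eqnarray}
using the notation $\lambda_j=\Tr(M_j\Tr_{{\mathcal K}}(\rho))$ and $\rho_j=\lambda_j^{-1}\Tr_{{\mathcal H}}((M_j\otimes I_{{\mathcal K}})\rho)$ from the statement. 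Because the $|e_j\rangle$ are mutually orthogonal, this is a block-diagonal state, so its von Neumann entropy decomposes exactly: $S((\Phi\otimes Id)(\rho))=H(\{\lambda_j\})+\sum_j\lambda_j S(\rho_j)$, where $H(\{\lambda_j\})=-\sum_j\lambda_j\log\lambda_j$ is the Shannon entropy of the distribution.

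Next I would identify the first term on the right. The reduced state $\Tr_{{\mathcal K}}(\rho)$ is sent by $\Phi$ to $\Phi(\Tr_{{\mathcal K}}(\rho))=\sum_j\lambda_j|e_j\rangle\langle e_j|$, a state diagonal in the orthonormal basis $\{|e_j\rangle\}$, so $S(\Phi(\Tr_{{\mathcal K}}(\rho)))=-\sum_j\lambda_j\log\lambda_j=H(\{\lambda_j\})$. Substituting this identity into the exact decomposition above turns the claimed inequality into the \emph{equality}
\begin{eqnarray}
S((\Phi\otimes Id)(\rho))= S(\Phi(\Tr_{{\mathcal K}}(\rho)))+\sum_{j=1}^{d}\lambda_j S(\rho_j),\nonumber
\end{eqnarray}
which is even stronger than the stated bound. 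This suggests the lemma is really an identity and the proof reduces to the block-diagonal entropy computation.

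The main obstacle I anticipate is the exact form of the partial measurement statistics: I must verify that $\Tr_{{\mathcal H}}((M_j\otimes I_{{\mathcal K}})\rho)$ is the correct (unnormalized) ${\mathcal K}$-block and that its trace equals $\lambda_j$, i.e.\ $\Tr(\Tr_{{\mathcal H}}((M_j\otimes I_{{\mathcal K}})\rho))=\Tr((M_j\otimes I_{{\mathcal K}})\rho)=\Tr(M_j\Tr_{{\mathcal K}}(\rho))=\lambda_j$, so that $\rho_j$ is genuinely a normalized state in $\mathfrak{S}({\mathcal K})$. This is a routine partial-trace manipulation, but it is where the bookkeeping must be done carefully, particularly because the $M_j$ need not be projectors (they form a general POVM). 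Once the block structure and the normalization are confirmed, the additivity of entropy for block-diagonal operators closes the argument.
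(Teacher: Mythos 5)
Your proof is correct, and it takes a genuinely different route from the paper. The paper's proof constructs an auxiliary channel $\Sigma_{\rho}(\sigma)=\sum_{j}\Tr(|e_{j}\rangle\langle e_{j}|\sigma)\,|e_{j}\rangle\langle e_{j}|\otimes\rho_{j}$, verifies that it maps $\Phi(\Tr_{\mathcal K}(\rho))$ to $(\Phi\otimes Id)(\rho)$ and $\tfrac{1}{d}I_{\mathcal H}$ to $\tfrac{1}{d}\sum_{j}|e_{j}\rangle\langle e_{j}|\otimes\rho_{j}$, and then invokes the decreasing property of relative entropy (Theorem \ref{decr}) to compare $S\bigl(\Sigma_{\rho}(\Phi(\Tr_{\mathcal K}(\rho)))\,\big\|\,\Sigma_{\rho}(\tfrac{1}{d}I_{\mathcal H})\bigr)$ with $S\bigl(\Phi(\Tr_{\mathcal K}(\rho))\,\big\|\,\tfrac{1}{d}I_{\mathcal H}\bigr)$; this is deliberate, since the paper's whole program is to derive superadditivity statements from monotonicity of relative entropy, and the same template is reused for the erasure channel where the output is not block diagonal and a genuine inequality is unavoidable. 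You instead observe that for a q-c channel the output $(\Phi\otimes Id)(\rho)=\sum_{j}\lambda_{j}|e_{j}\rangle\langle e_{j}|\otimes\rho_{j}$ is exactly a classical-quantum (block-diagonal) state, so its entropy decomposes exactly as $H(\{\lambda_{j}\})+\sum_{j}\lambda_{j}S(\rho_{j})$, while $S(\Phi(\Tr_{\mathcal K}(\rho)))=H(\{\lambda_{j}\})$; this yields the lemma as an \emph{equality}, strictly stronger than the stated bound, and requires nothing beyond the entropy formula for block-diagonal operators (indeed it shows that the monotonicity step in the paper's proof is saturated for this channel). Your bookkeeping concerns are handled correctly: $\Tr_{\mathcal H}((M_{j}\otimes I_{\mathcal K})\rho)=\Tr_{\mathcal H}((M_{j}^{1/2}\otimes I_{\mathcal K})\rho(M_{j}^{1/2}\otimes I_{\mathcal K}))\ge 0$ with trace $\lambda_{j}$, and any index with $\lambda_{j}=0$ contributes a vanishing block and can be dropped. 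In short: your argument is more elementary and gives a sharper conclusion for this particular channel; the paper's argument is weaker here but uniform across its whole class of channels.
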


\begin{proof} [Proof Lemma \ref{lemqc}]
Let us define a quantum channel $\Sigma _{ \rho}:\mathfrak{S}
({\mathcal H})\to \mathfrak {S}({\mathcal H}\otimes {\mathcal K})$ by the formula
\begin{equation}
\Sigma _{ \rho}( \sigma):=\sum \limits
_{j=1}^{d}\Tr(|e_{j}\rangle\langle e_{j}| \sigma )|e_{j}\rangle\langle e_{j}|\otimes
 \rho _{j},
 \nonumber
\end{equation}
where the states $ \rho _{j}\in \mathfrak{S}({\mathcal K})$ are the same
as in the formulation of the Lemma \ref{lemqc}. One can see that
\begin {equation}\label {f1}
\Sigma _{ \rho}(\Phi (\Tr_{{\mathcal K}}( \rho)))=(\Phi \otimes
Id)( \rho),
\end {equation}
\begin {equation}\label {f2}
\Sigma _{ \rho}\left(\frac {1}{d}I_{{\mathcal H}}\right)=\frac {1}{d}\sum \limits
_{j=1}^{d}|e_{j}\rangle\langle e_{j}|\otimes  \rho _{j}.
\end {equation}

The decreasing property of the relative entropy Eq.(\ref {decr})
gives us
\begin{equation}
S\left(\Sigma _{ \rho }(\Phi (\Tr_{{\mathcal K}}( \rho)))\,\Big\|\,\Sigma _{
\rho }(\frac {1}{d}I_{{\mathcal H}})\right)\le S\left(\Phi (\Tr_{{\mathcal K}}( \rho))\,\Big\|\,\frac
{1}{d}I_{{\mathcal H}}\right).
\nonumber
\end{equation}
Taking into account Eq.(\ref {f1}) and (\ref {f2}) we get
\begin{equation}
S\left(\Phi (\Tr_{{\mathcal K}}( \rho))\,\Big\|\,\frac {1}{d}I_{{\mathcal H}}\right)=\log d-S\left(\Phi
(\Tr_{{\mathcal K}}( \rho))\right),
\nonumber
\end{equation}
and
\begin{eqnarray}
S\left(\Sigma _{ \rho }(\Phi(\Tr_{\mathcal K} ( \rho)))\,\Big\|\,\Sigma _{ \rho
}(\frac {1}{d}I_{{\mathcal H}})\right)&=&\log d+\sum \limits _{j=1}^{d}\lambda
_{j}S( \rho _{j})\nonumber\\
&-&S((\Phi \otimes Id)( \rho)),
\nonumber
\end{eqnarray}
from which the result of Lemma \ref{lemqc} follows.
\hfill\end{proof}

\bigskip

\begin{proof}[Proof Theorem \ref{theoqc}]
Let $\Phi $ be the q-c channel (\ref {qc}). Suppose that $\Omega$
is an arbitrary channel and
\begin {equation}\label {avrqc}
 \rho =\sum \limits _{j=1}^{k}p_{j} \rho _{j},
\end {equation}
such that the states $ \rho _{j},\ 1\le j\le k,$ form the
optimal ensemble for the output entropy of $\Phi\otimes\Omega$, i.e.
\begin {equation}\label {avr2qc}
 H_{\Phi \otimes \Omega}( \rho)=\sum \limits
_{j}p_{j}S((\Phi \otimes \Omega)( \rho _{j})).
\nonumber
\end {equation}
Applying Lemma \ref{lemqc} to each term in the sum on the right hand
side we get
\begin{eqnarray}
 H_{\Phi \otimes \Omega}( \rho )&\ge& \sum \limits
_{j}p_{j}S(\Phi (\Tr_{{\mathcal K}}( \rho_{j})))\nonumber\\
&+&\sum \limits _{j}p_{j}\sum \limits _{k=1}^{d}\lambda
_{jk}S(\Omega ( \rho _{jk})),
\nonumber
\end{eqnarray}
where $\lambda _{jk}=\Tr( M_{k}\Tr_{{\mathcal K}}( \rho _{j}))$ and
$\rho _{jk}=\frac {1}{\lambda _{jk}}\Tr_{{\mathcal H}}(( M_{k}\otimes
I_{{\mathcal K}}) \rho _{j})\in \mathfrak{S}({\mathcal K})$. By the definitions \eqref{avrqc}
and \eqref{quant} we obtain on the one hand
\begin{equation}
\sum \limits _{j}p_{j}S(\Phi (\Tr_{{\mathcal K}}( \rho_{j})))\ge
H_{\Phi}(\Tr_{{\mathcal K}}( \rho )).
\nonumber
\end{equation}
On the other hand,
\begin{equation}
\sum \limits _{j}p_{j}\sum \limits _{k}\lambda _{jk}\Omega ( \rho
_{jk})=\Omega (\Tr_{{\mathcal H}}( \rho )).
\nonumber
\end{equation}
The last formula implies that
\begin{equation}
\sum \limits _{j}p_{j}\sum \limits _{k=1}^{d}\lambda _{jk}S(\Omega
( \rho _{jk}))\ge  H_{\Omega}(\Tr_{{\mathcal H}}( \rho )).
\nonumber
\end{equation}
Then the result of Theorem \ref{theoqc} follows.
\hfill\end{proof}

\bigskip

Notice that a q-c channel is a partial case of the
entanglement-breaking channels considered in \cite{Hol04}.
 So our proof is alternative to the one given in \cite{Hol04}
 for entanglement-breaking channels.


\subsection {The quantum erasure channel}

Let ${\mathcal H}$ and ${\mathcal H}'$ be Hilbert spaces of dimension $d$ and $d+1$
respectively. We claim that ${\mathcal H}\subset {\mathcal H}'$ which results in the
inclusion $\mathfrak{S}({\mathcal H})\subset \mathfrak{S}({\mathcal H}')$. Suppose that
$|\omega\rangle \in {\mathcal K}$ is orthogonal to ${\mathcal H}$. Fix $\epsilon$ such that   $0\le \epsilon
\le1$, then we call quantum erasure channel the CP-map $\Phi:\mathfrak{S}({\mathcal H})\to
\mathfrak{S}({\mathcal H}')$ defined by
\begin{equation}\label {er}
\Phi ( \rho ):=\epsilon |\omega\rangle\langle \omega |+(1-\epsilon ) \rho ,
\end{equation}
with $ \rho \in \mathfrak{S}({\mathcal H})$.
Notice that this is a generalization to dimension $d$ of the qubit erasure channel introduced in \cite{Ben97}.

\begin{theorem}
Let $\Phi$ be the erasure
channel (\ref{er}), then the
inequality
\begin {equation}
 H_{\Phi\otimes \Omega}(\rho)\ge  H_{\Phi}(\Tr_{{\mathcal K}}(\rho))+ H_{\Omega}(\Tr_{{\mathcal H}}(\rho)),
 \label{stronger}
 \nonumber
\end {equation}
 holds for an
arbitrary quantum channel $\Omega :{\mathfrak
S}({\mathcal K})\to {\mathfrak S}({\mathcal K})$.
\label{theoer}
\end{theorem}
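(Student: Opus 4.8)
The plan is to exploit the trivial block structure of the erasure channel on its output space, reducing the proof to two contributions that are already controlled by earlier results. First I would note that the output space splits as $\mathcal{H}'=\mathcal{H}\oplus\mathbb{C}|\omega\rangle$, with $|\omega\rangle$ orthogonal to $\mathcal{H}$, so that for any $\tau\in\mathfrak{S}(\mathcal{H}\otimes\mathcal{K})$ the two terms of
$$(\Phi\otimes Id)(\tau)=(1-\epsilon)\,\tau+\epsilon\,|\omega\rangle\langle\omega|\otimes\Tr_{\mathcal{H}}(\tau)$$
are supported on the mutually orthogonal subspaces $\mathcal{H}\otimes\mathcal{K}$ and $|\omega\rangle\langle\omega|\otimes\mathcal{K}$. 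Hence the spectrum of $(\Phi\otimes Id)(\tau)$ is the disjoint union of $(1-\epsilon)$ times that of $\tau$ and $\epsilon$ times that of $\Tr_{\mathcal{H}}(\tau)$. Reading off the entropy of this direct sum, and using $\Phi\otimes\Omega=(\Phi\otimes Id)\circ(Id\otimes\Omega)$ together with $\Tr_{\mathcal{H}}((Id\otimes\Omega)(\tau))=\Omega(\Tr_{\mathcal{H}}(\tau))$, gives the single-shot identity
$$S((\Phi\otimes\Omega)(\tau))=h(\epsilon)+(1-\epsilon)\,S((Id\otimes\Omega)(\tau))+\epsilon\,S(\Omega(\Tr_{\mathcal{H}}(\tau))),$$
where $h(\epsilon)=-\epsilon\log\epsilon-(1-\epsilon)\log(1-\epsilon)$. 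Taking $\Omega=Id$ with $\mathcal{K}$ absent yields $S(\Phi(\sigma'))=h(\epsilon)+(1-\epsilon)S(\sigma')$; since then $\sum_k\pi_k S(\Phi(\sigma_k))=h(\epsilon)+(1-\epsilon)\sum_k\pi_k S(\sigma_k)$ is minimised by a pure-state decomposition, one gets $H_{\Phi}(\sigma)=h(\epsilon)$ for every $\sigma\in\mathfrak{S}(\mathcal{H})$, which already fixes the first term on the right of the claimed inequality.

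Next I would select an optimal ensemble $\rho=\sum_j p_j\rho_j$ realising $H_{\Phi\otimes\Omega}(\rho)=\sum_j p_j S((\Phi\otimes\Omega)(\rho_j))$ and apply the single-shot identity termwise, obtaining
$$H_{\Phi\otimes\Omega}(\rho)=h(\epsilon)+(1-\epsilon)\,A+\epsilon\,B,$$
with $A=\sum_j p_j S((Id\otimes\Omega)(\rho_j))$ and $B=\sum_j p_j S(\Omega(\Tr_{\mathcal{H}}(\rho_j)))$. Since $H_{\Phi}(\Tr_{\mathcal{K}}(\rho))=h(\epsilon)$, the theorem reduces to the single estimate $(1-\epsilon)A+\epsilon B\ge H_{\Omega}(\Tr_{\mathcal{H}}(\rho))$.

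The term $B$ is immediate: $\{\Tr_{\mathcal{H}}(\rho_j),p_j\}$ is a legitimate decomposition of $\Tr_{\mathcal{H}}(\rho)$, so by the definition \eqref{quant} of $H_{\Omega}$ one has $B\ge H_{\Omega}(\Tr_{\mathcal{H}}(\rho))$. For $A$ I would observe that $\{\rho_j,p_j\}$ is in particular an ensemble for $\rho$, whence $A\ge H_{Id\otimes\Omega}(\rho)$ directly from the definition of $H_{Id\otimes\Omega}$; applying Theorem \ref{Identity} to the noiseless channel on $\mathcal{H}$ then gives $H_{Id\otimes\Omega}(\rho)\ge H_{\Omega}(\Tr_{\mathcal{H}}(\rho))$, so that $A\ge H_{\Omega}(\Tr_{\mathcal{H}}(\rho))$ as well. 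As both $A$ and $B$ dominate $H_{\Omega}(\Tr_{\mathcal{H}}(\rho))$, so does the convex combination $(1-\epsilon)A+\epsilon B$, and the result follows.

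I expect the genuinely delicate point to be the lower bound on $A$. Unlike $B$, the quantity $A$ is an average of entropies of the bipartite states $(Id\otimes\Omega)(\rho_j)$, which for entangled $\rho_j$ can be far smaller than the entropy of their $\mathcal{K}$-marginals; a naive subadditivity or Araki--Lieb estimate would therefore be too lossy to close the gap. The decisive idea is instead to recognise $A$ as an admissible value of $H_{Id\otimes\Omega}(\rho)$, thereby routing the bound through the already established superadditivity of the noiseless channel rather than through any direct entropy inequality. The only auxiliary point to verify is the existence of a minimising ensemble, which follows from the same compactness argument used in the preceding proofs.
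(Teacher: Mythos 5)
Your proof is correct, and it reaches Theorem \ref{theoer} by a genuinely different route. The overall architecture matches the paper's: take an optimal ensemble for $H_{\Phi\otimes\Omega}$, apply a single-letter estimate for the output entropy term by term, bound the $\epsilon$-weighted average $B$ via the definition \eqref{quant}, and bound the $(1-\epsilon)$-weighted average $A$ via the strong superadditivity of the noiseless channel (your appeal to Theorem \ref{Identity} is exactly the paper's appeal to the same fact). The difference lies in the single-letter estimate itself. The paper derives it as an inequality, Lemma \ref{lemer}, using the decreasing property of relative entropy applied to an auxiliary channel $\Sigma_{\rho}$, consistent with the paper's general theme; you instead observe that the two terms of $(\Phi\otimes Id)(\tau)$ are supported on the orthogonal subspaces ${\mathcal H}\otimes{\mathcal K}$ and ${\rm span}(|\omega\rangle)\otimes{\mathcal K}$, so that the output entropy is given \emph{exactly} by $h(\epsilon)+(1-\epsilon)S(\tau)+\epsilon S(\Tr_{\mathcal H}(\tau))$, with $h(\epsilon)=-\epsilon\log\epsilon-(1-\epsilon)\log(1-\epsilon)$, and likewise $S(\Phi(\sigma))=h(\epsilon)+(1-\epsilon)S(\sigma)$, whence $H_{\Phi}\equiv h(\epsilon)$. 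This is more elementary (no relative entropy is needed at all for the erasure channel) and yields an identity rather than a bound.

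Your exact identity also exposes a genuine problem in the paper's own proof: Lemma \ref{lemer} is false as stated. Its right-hand side contains $S(\Phi(\Tr_{\mathcal K}(\rho)))=h(\epsilon)+(1-\epsilon)S(\Tr_{\mathcal K}(\rho))$, so comparing with the exact value of $S((\Phi\otimes Id)(\rho))$ the lemma would force $0\ge(1-\epsilon)S(\Tr_{\mathcal K}(\rho))$, which fails whenever $\epsilon<1$ and the ${\mathcal H}$-marginal of $\rho$ is mixed (e.g.\ at $\epsilon=0$ it asserts $S(\rho)\ge S(\rho)+S(\Tr_{\mathcal K}(\rho))$). The flaw in its derivation is a support problem: $\Phi(\Tr_{\mathcal K}(\rho))$ is generally not supported in the two-dimensional range of the reference state $\frac{1}{2}|\omega\rangle\langle\omega|+\frac{1}{2}|e\rangle\langle e|$, so the relative entropy on the right-hand side of the monotonicity inequality is actually $+\infty$, and the finite expression computed for it there is invalid. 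The theorem itself is unharmed: replacing $S(\Phi(\Tr_{\mathcal K}(\rho_{j})))$ by the constant $h(\epsilon)=H_{\Phi}(\Tr_{\mathcal K}(\rho))$, which is what the exact identity provides, makes every subsequent step of the paper's argument valid and reproduces precisely your chain $(1-\epsilon)A+\epsilon B\ge H_{\Omega}(\Tr_{\mathcal H}(\rho))$. In short, your proof not only establishes the statement independently; it supplies the correction that the paper's proof needs.
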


To prove the theorem we need of the following lemma.

\begin{lemma}
Let $\Phi $ be the quantum erasure channel \eqref{er}.
Then, given a state $ \rho \in \mathfrak{S}({\mathcal H}\otimes {\mathcal K})$ it is
$$
S((\Phi \otimes Id)( \rho))\ge \epsilon S(\Tr_{{\mathcal H}}( \rho
))+(1-\epsilon )S( \rho )+S(\Phi (\Tr_{{\mathcal K}}( \rho ))).
$$
\label{lemer}
\end{lemma}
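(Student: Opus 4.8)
The plan is to exploit the very special structure of the erasure output. Since $\Phi$ replaces its input by the fixed pure state $|\omega\rangle\langle\omega|$ with probability $\epsilon$ and leaves it untouched with probability $1-\epsilon$, acting with $\Phi\otimes Id$ on $\rho$ gives
\begin{equation}
(\Phi\otimes Id)(\rho)=\epsilon\,|\omega\rangle\langle\omega|\otimes \Tr_{{\mathcal H}}(\rho)+(1-\epsilon)\,\rho. \nonumber
\end{equation}
First I would observe that, because $|\omega\rangle$ is orthogonal to ${\mathcal H}$, the two summands are supported on the mutually orthogonal subspaces of ${\mathcal H}'\otimes{\mathcal K}$ spanned respectively by $|\omega\rangle\otimes{\mathcal K}$ and by ${\mathcal H}\otimes{\mathcal K}$. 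Hence the output is block diagonal with respect to this splitting, and this orthogonality is the single structural fact on which everything rests.

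The second step would be to evaluate the entropy of a block-diagonal state by the grouping property of the von Neumann entropy, namely $S\!\left(p\,\tau_1\oplus(1-p)\tau_2\right)=-p\log p-(1-p)\log(1-p)+pS(\tau_1)+(1-p)S(\tau_2)$ for states $\tau_1,\tau_2$ on orthogonal subspaces. To stay within the paper's toolkit one can derive this identity from Theorem \ref{decr}, the pinching onto the two-block diagonal being a conditional expectation exactly as in the earlier proofs. Applying it with $p=\epsilon$, $\tau_1=|\omega\rangle\langle\omega|\otimes\Tr_{\mathcal H}(\rho)$, $\tau_2=\rho$, and using $S(|\omega\rangle\langle\omega|\otimes\Tr_{\mathcal H}(\rho))=S(\Tr_{\mathcal H}(\rho))$, yields the exact value
\begin{equation}
S((\Phi\otimes Id)(\rho))=h(\epsilon)+\epsilon\,S(\Tr_{\mathcal H}(\rho))+(1-\epsilon)\,S(\rho), \nonumber
\end{equation}
where $h(\epsilon):=-\epsilon\log\epsilon-(1-\epsilon)\log(1-\epsilon)$.

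Finally I would account for the third term on the right-hand side. The same block argument carried out on ${\mathcal H}'$ alone gives $S(\Phi(\Tr_{\mathcal K}(\rho)))=h(\epsilon)+(1-\epsilon)S(\Tr_{\mathcal K}(\rho))$, while decomposing $\Tr_{\mathcal K}(\rho)$ into pure inputs shows $H_{\Phi}(\Tr_{\mathcal K}(\rho))=h(\epsilon)$, since $\Phi$ sends any pure state to an output of entropy exactly $h(\epsilon)$. Matching the constant $h(\epsilon)$ produced in the second step against this contribution then closes the estimate. This is also precisely the form needed downstream in Theorem \ref{theoer}: inserting the identity above for $(Id\otimes\Omega)(\rho_j)$ over the optimal ensemble, and combining $\sum_j p_j S(\Omega(\Tr_{\mathcal H}(\rho_j)))\ge H_\Omega(\Tr_{\mathcal H}(\rho))$ with the noiseless bound of Theorem \ref{Identity} for $\sum_j p_j S((Id\otimes\Omega)(\rho_j))$, recovers the strong superadditivity.

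The main obstacle I anticipate is the careful bookkeeping of this third term rather than any deep inequality: the clean route returns the \emph{exact} value, with the constant $h(\epsilon)=H_{\Phi}(\Tr_{\mathcal K}(\rho))$, so one must be sure not to double-count the marginal entropy $(1-\epsilon)S(\Tr_{\mathcal K}(\rho))$ that is hidden inside $S(\Phi(\Tr_{\mathcal K}(\rho)))$. The only genuinely structural input is the orthogonality $|\omega\rangle\perp{\mathcal H}$, which is what turns the entropy split into an equality; without it one would be left only with subadditivity and the argument would not close.
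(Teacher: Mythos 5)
Your block-diagonal computation is correct, and it is a genuinely different route from the paper's: the paper proves Lemma \ref{lemer} by introducing an auxiliary channel $\Sigma_{\rho}$ and invoking the decreasing property of relative entropy (Theorem \ref{decr}) against the reference state $\frac{1}{2}|\omega\rangle\langle\omega|+\frac{1}{2}|e\rangle\langle e|$, whereas you evaluate the output entropy exactly via the grouping identity. But your exact formula does more than re-derive the estimate: it shows that the lemma, as literally stated, is \emph{false} whenever $\Tr_{{\mathcal K}}(\rho)$ is mixed and $\epsilon<1$. Writing $h(\epsilon):=-\epsilon\log\epsilon-(1-\epsilon)\log(1-\epsilon)$, your two identities
\begin{eqnarray}
S((\Phi\otimes Id)(\rho))&=&h(\epsilon)+\epsilon S(\Tr_{{\mathcal H}}(\rho))+(1-\epsilon)S(\rho),\nonumber\\
S(\Phi(\Tr_{{\mathcal K}}(\rho)))&=&h(\epsilon)+(1-\epsilon)S(\Tr_{{\mathcal K}}(\rho)),\nonumber
\end{eqnarray}
make the claimed inequality equivalent to $0\ge(1-\epsilon)S(\Tr_{{\mathcal K}}(\rho))$. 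Concretely, for $\rho=\frac{1}{d}I_{{\mathcal H}}\otimes|\phi\rangle\langle\phi|$ the left-hand side is $h(\epsilon)+(1-\epsilon)\log d$ while the right-hand side is $h(\epsilon)+2(1-\epsilon)\log d$. What you actually establish --- and all that Theorem \ref{theoer} needs --- is the corrected statement with $H_{\Phi}(\Tr_{{\mathcal K}}(\rho))=h(\epsilon)$ in place of $S(\Phi(\Tr_{{\mathcal K}}(\rho)))$; your closing remark about not double-counting $(1-\epsilon)S(\Tr_{{\mathcal K}}(\rho))$ shows you saw exactly this, and your verification that the strong superadditivity still follows (using $\sum_j p_j S(\Omega(\Tr_{{\mathcal H}}(\rho_j)))\ge H_{\Omega}(\Tr_{{\mathcal H}}(\rho))$ together with Theorem \ref{Identity}) is the right repair.

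For completeness, the corresponding defect in the paper's own proof is localized by your observation. The data-processing step there compares $\Phi(\Tr_{{\mathcal K}}(\rho))$ with the rank-two state $\frac{1}{2}|\omega\rangle\langle\omega|+\frac{1}{2}|e\rangle\langle e|$, and the relative entropy on the right-hand side of that inequality is finite only when the support of $\Phi(\Tr_{{\mathcal K}}(\rho))$ lies in ${\rm span}\{|\omega\rangle,|e\rangle\}$, i.e. only when $\Tr_{{\mathcal K}}(\rho)=|e\rangle\langle e|$ is pure --- precisely the case in which $S(\Phi(\Tr_{{\mathcal K}}(\rho)))=h(\epsilon)$ and the stated lemma coincides with your corrected one (the factors $\log d$ appearing in the paper's two evaluations should also be $\log 2$). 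So your more elementary argument not only bypasses Theorem \ref{decr} entirely; it yields an exact equality, exposes the overstated term in the lemma, and still delivers the strong superadditivity of the erasure channel.
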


\begin{proof} [Proof Lemma \ref{lemer}]
Denote by $P_{{\mathcal H}}$ the orthogonal projection in ${\mathcal H}'$ onto the
subspace ${\mathcal H}$. Given $ \rho \in \mathfrak{S}({\mathcal H}\otimes {\mathcal K})$ let us
define a quantum channel $\Sigma _{ \rho}:\mathfrak{S} ({\mathcal H}')\to
\mathfrak{S}({\mathcal H}'\otimes {\mathcal K})$ by the formula
\begin{equation}
\Sigma _{ \rho}( \sigma):=\Tr(|\omega\rangle\langle\omega | \sigma
)|\omega \rangle\langle\omega |\otimes \Tr_{{\mathcal H}}( \rho)+\Tr(P_{{\mathcal H}}
\sigma ) \rho,
\nonumber
\end{equation}
with $ \sigma \in \mathfrak{S}({\mathcal H}')$.

Pick up the orthogonal projection $|e \rangle\langle e|$ from the spectral
decomposition of the state $\Tr_{{\mathcal K}}( \rho )$. One can see that
\begin {equation}\label {f3}
\Sigma _{ \rho}(\Phi (\Tr_{{\mathcal K}}( \rho)))=(\Phi \otimes
Id)( \rho),
\end {equation}
\begin {equation}\label {f4}
\Sigma _{ \rho}\left(\frac {1}{2}|\omega \rangle\langle \omega |+\frac
{1}{2}|e\rangle\langle e|\right)=\frac {1}{2}|\omega \rangle\langle
\omega |\otimes \Tr_{{\mathcal H}}( \rho )+\frac {1}{2} \rho.
\end {equation}
 The decreasing
property of the relative entropy (\ref {decr}) gives us
\begin{eqnarray}
&&S\left(\Sigma _{ \rho }\left(\Phi (\Tr_{{\mathcal K}}(
\rho))\right)\,\Big\|\, \Sigma _{ \rho }\left(\frac {1}{2}|\omega
\rangle\langle \omega |+\frac {1}{2}|e\rangle\langle
e|\right)\right)
\nonumber\\
&&\le S\left(\Phi (\Tr_{{\mathcal K}}( \rho))\,\Big\|\, \frac {1}{2}|\Omega \rangle\langle \Omega |+\frac
{1}{2}|e\rangle\langle e|)\right).
\nonumber
\end{eqnarray}
Taking into account (\ref {f3}) and (\ref {f4}) we get
\begin{eqnarray}
&&S\left(\Phi (\Tr_{{\mathcal K}}( \rho))\,\Big\|\,\frac {1}{2}|\omega
\rangle\langle\omega |+\frac
{1}{2}|e \rangle\langle e|\right)\nonumber\\
&&=(\epsilon +(1-\epsilon)\langle e|\Tr_{{\mathcal K}}( \rho)|e\rangle)\log d-S(\Phi
(\Tr_{{\mathcal K}}( \rho)))\nonumber\\
&&\le \log d-S(\Phi (\Tr_{{\mathcal K}}( \rho))),
\nonumber
\end{eqnarray}
and
\begin{eqnarray}
&&S\left(\Sigma _{ \rho }(\Phi ( \rho))\,\Big\|\,\Sigma _{
\rho}(\frac {1}{2}|\omega \rangle\langle\omega |+\frac {1}{2}|e
\rangle\langle e|)\right)=
\nonumber\\
&&\log d+\epsilon S(\Tr_{{\mathcal H}}( \rho ))+(1-\epsilon )S(
\rho)-S((\Phi \otimes Id)( \rho)).\nonumber
\end{eqnarray}
The result of Lemma \ref{lemer} then follows.
\hfill\end{proof}

\bigskip

\begin{proof} [Proof Theorem \ref{theoer}]
Let $\Phi $ be the erasure channel (\ref {er}). Suppose that $\Omega$ is an arbitrary channel and
\begin {equation}\label {avrer}
 \rho =\sum \limits _{j=1}^{k}p_{j} \rho _{j}
\end {equation}
is such that the states $ \rho _{j},\ 1\le j\le k,$ form the
optimal ensemble for for the output entropy of $\Phi\otimes\Omega$, i.e.
\begin {equation}\label {avr2er}
 H_{\Phi \otimes \Omega}( \rho)=\sum \limits
_{j}p_{j}S((\Phi \otimes \Omega)( \rho _{j}))=\sum \limits
_{j}p_{j}S((\Phi \otimes Id)(\tilde \rho _{j})),
\nonumber
\end {equation}
with $\tilde \rho _{j}=(Id\otimes \Omega)(\rho _{j})$.
Applying Lemma \ref{lemer} to each term in the sum on the right hand
side of the above equation we get
\begin{eqnarray}
 H_{\Phi \otimes \Omega}( \rho )\ge \sum \limits
_{j}&p_{j}&  \left [\epsilon S(\Omega (\Tr_{{\mathcal H}}( \rho
_{j})))\right.\nonumber\\
&&+\left.(1-\epsilon )S((Id\otimes \Omega)( \rho _{j}))\right.
\nonumber\\
&&+\left. S(\Phi
(\Tr_{{\mathcal K}}( \rho _{j} )))\right ].
\nonumber
\end{eqnarray}
Notice also that
$$
\sum \limits _{j}p_{j}S((Id\otimes \Omega)( \rho _{j}))\ge
H_{Id\otimes \Omega}( \rho)\ge  H_{\Omega}(\Tr_{{\mathcal H}}( \rho ))
$$
because the strong superadditivity conjecture holds for the
noiseless channel \cite{Hol04}. Then, the result of Theorem \ref{theoer} follows.
\hfill\end{proof}


\subsection {The quantum depolarizing channel}

The quantum depolarizing channel in the
Hilbert space ${\mathcal H}$ of dimension $d$
is defined as \cite{Amo07a}
\begin{equation}
\Phi (\rho):=(1-p)\rho+\frac {p}{d}I_{{\mathcal H}}, \label{qdc}
\end{equation}
with $\ \rho\in \mathfrak{S} ({\mathcal H}),\ 0\le p\le d^{2}/(d^{2}-1)$.

\begin{theorem}
Let $\Phi$ be the quantum depolarizing
channel (\ref{qdc}), then the inequality
\begin {equation}
 H_{\Phi\otimes \Omega }(\rho)\ge  H_{\Phi}(\Tr_{{\mathcal K}}(\rho))+ H_{\Omega }(\Tr_{{\mathcal H}}(\rho)),
 \nonumber
\end {equation}
 holds for an
arbitrary quantum channel $\Omega :{\mathfrak
S}({\mathcal K})\to {\mathfrak S}({\mathcal K})$.
\label{dc}
\end{theorem}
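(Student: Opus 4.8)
The plan is to follow the two-stage pattern already used for the q-c and erasure channels: first establish a per-input-state estimate, then feed in the optimal ensemble realizing $H_{\Phi\otimes\Omega}(\rho)$. The decisive simplification, special to the depolarizing channel \eqref{qdc}, is that every pure input produces the \emph{same} output spectrum: $\Phi(|\psi\rangle\langle\psi|)=(1-p)|\psi\rangle\langle\psi|+\frac pd I_{\mathcal H}$ has eigenvalues $(1-p)+\frac pd$ (once) and $\frac pd$ ($d-1$ times), independently of $|\psi\rangle$. Hence $S(\Phi(|\psi\rangle\langle\psi|))=S_{min}(\Phi)$ for every pure state, and since any $\tau$ admits a pure-state decomposition while $S(\Phi(\cdot))\ge S_{min}(\Phi)$ always, one gets $H_{\Phi}(\tau)=S_{min}(\Phi)$ for \emph{every} marginal $\tau$. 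Thus $H_\Phi(\Tr_{\mathcal K}(\rho))=S_{min}(\Phi)$ is automatic, and it suffices to prove the per-state estimate $S((\Phi\otimes\Omega)(\rho_j))\ge S_{min}(\Phi)+H_\Omega(\Tr_{\mathcal H}(\rho_j))$ for each member of the optimal ensemble.

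To obtain this estimate I would exploit two structural features of $\Phi$: it is \emph{unital}, and it is covariant under the whole unitary group, $\Phi(U\rho U^{*})=U\Phi(\rho)U^{*}$, since $UI_{\mathcal H}U^{*}=I_{\mathcal H}$. Covariance lets me reduce, for each fixed $\rho_j$, to the case $\Tr_{\mathcal K}(\rho_j)\in\mathcal A$: choosing $U$ to diagonalize $\Tr_{\mathcal K}(\rho_j)$ in the standard basis and replacing $\rho_j$ by $(U^{*}\otimes I_{\mathcal K})\rho_j(U\otimes I_{\mathcal K})$ leaves $S((\Phi\otimes\Omega)(\rho_j))$ and $\Tr_{\mathcal H}(\rho_j)$ unchanged while moving the $\mathcal H$-marginal into the commutative algebra $\mathcal A$. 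Once $\Tr_{\mathcal K}(\rho_j)\in\mathcal A$ and $d$ is prime, the depolarizing channel is itself a Weyl channel of the form \eqref{Chan}, with $r_0=1-p+\frac p{d^{2}}$, $r_m=\frac p{d^{2}}$ for $m\ge1$, and $p_n=\frac p{d^{2}}$ for $1\le n\le d-1$; the decomposition \eqref{Equat} into Weyl-conjugated phase damping channels $\Psi_k$ then holds, and Theorem~\ref{weyl} (equivalently, applying Theorem~\ref{fff2} to each $\Psi_k\otimes\Omega$ and averaging by concavity of $S$) yields exactly the required per-state bound. The constant is sharp because the constituent spectrum $\lambda_0=(1-p)+\frac pd$, $\lambda_n=\frac pd$ satisfies $-\sum_n\lambda_n\log\lambda_n=S_{min}(\Phi)$.

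Assembling: I take the optimal ensemble $\rho=\sum_j p_j\rho_j$ with $H_{\Phi\otimes\Omega}(\rho)=\sum_j p_j S((\Phi\otimes\Omega)(\rho_j))$, apply the per-state estimate to every $\rho_j$, and obtain $H_{\Phi\otimes\Omega}(\rho)\ge S_{min}(\Phi)+\sum_j p_jH_\Omega(\Tr_{\mathcal H}(\rho_j))\ge S_{min}(\Phi)+H_\Omega(\Tr_{\mathcal H}(\rho))$, the last step being convexity of the convex-roof quantity $H_\Omega$ together with $\sum_j p_j\Tr_{\mathcal H}(\rho_j)=\Tr_{\mathcal H}(\rho)$; finally $S_{min}(\Phi)=H_\Phi(\Tr_{\mathcal K}(\rho))$ gives Theorem~\ref{dc}.

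I expect the main obstacle to be \emph{composite} $d$. The reduction above leans on the decomposition \eqref{Equat} of the Weyl channel into phase damping channels, which is available only when $d$ is prime, as it uses the mutually unbiased structure of the $U_{m,n}$. For general $d$, a direct attempt through the decreasing property of the relative entropy built from the $d^{2}$ Weyl operators over-counts the coherence preserved by $\Phi$ and delivers only the weaker constant $-\sum_{m,n}\pi_{m,n}\log\pi_{m,n}-\log d<S_{min}(\Phi)$; extracting the sharp value $S_{min}(\Phi)$ in every dimension is precisely the hard point, and is handled by the dedicated covariance argument of \cite{Amo07a}, which the present relative-entropy method recovers but does not trivially supersede.
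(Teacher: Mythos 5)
Your argument is sound in the prime-dimension case, and there it runs close to the paper's own proof: the observation that $H_{\Phi}(\tau)=S_{min}(\Phi)$ for \emph{every} marginal $\tau$ (all pure inputs of the depolarizing channel have the same output spectrum) is correct and legitimately simplifies the bookkeeping; the unitary-covariance reduction of each ensemble member to a marginal lying in $\mathcal A$ is valid; identifying $\Phi$ with the Weyl channel \eqref{Chan} (with $r_0=1-p+p/d^{2}$, $r_m=p_n=p/d^{2}$) so that Theorem \ref{weyl} applies is exactly right; and the convex-roof assembly at the end is fine. But there is a genuine gap: Theorem \ref{dc} is stated for arbitrary finite $d$, and your route through the decomposition \eqref{Equat} exists only for prime $d$. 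You see this obstacle, but you resolve it incorrectly, by deferring composite $d$ to the ``dedicated covariance argument'' of \cite{Amo07a} and asserting that the relative-entropy method cannot extract the sharp constant $S_{min}(\Phi)$ in every dimension. That assertion is false, and it is precisely where your proposal falls short of the paper's proof.

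The paper closes the composite case with a different decomposition: King's decomposition \eqref{king} (from \cite{Kin02}) writes the depolarizing channel in \emph{any} dimension as a convex combination of unitary conjugates of $2d^{2}$ phase damping channels $\Psi_k$, built from the quadratic-phase bases \eqref{bas}; the crucial structural fact is that the single reference basis $\{|f_{j}\rangle\}$ is unbiased with respect to all $2d^{2}$ bases $\{|e_{j}^{k}\rangle\}$ simultaneously. Since the depolarizing channel is covariant under the \emph{full} unitary group (not only the Weyl group), each ensemble member $\rho_s$ can be rotated so that its $\mathcal H$-marginal is diagonal in $\{|f_{j}\rangle\}$, hence unbiased with respect to every basis entering \eqref{king}; concavity over that decomposition then allows Theorem \ref{fff2} to be applied term by term, with phase-damping spectrum $\lambda_{0}=1-\frac{d-1}{d}p$, $\lambda_{j}=\frac{p}{d}$, whose entropy is exactly $S_{min}(\Phi)=H_{\Phi}(\Tr_{\mathcal K}(\rho_s))$. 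In other words, the mechanism you use for prime $d$ (covariance, decomposition into phase dampings, Theorem \ref{fff2}, convex-roof assembly) carries over verbatim to all $d$ once the MUB/Weyl decomposition \eqref{Equat} is replaced by \eqref{king}; no appeal to \cite{Amo07a} is needed, and the relative-entropy method does, by itself, prove the theorem in every dimension.
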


To prove Theorem \ref{dc} we need of some properties of the quantum depolarizing channel.

 Following Ref.\cite{Kin02}, by choosing an orthonormal basis $\{|f_{j}\rangle\}$ in ${\mathcal H}$,
we can define a set of orthonormal bases
$\{\{|e^k_j\rangle\}_{j=0}^{d-1}\}_{k=1}^{2d^{2}}$ as
\begin{equation}\label {bas}
|e_{j}^{k}\rangle:=\sum \limits _{s=0}^{d-1}\exp\left(i\frac {2\pi
s^{2}k}{2d^{2}}\right)\exp\left(i\frac {2\pi j}{d}\right)|f_{s}\rangle,
\end{equation}
with $1\le k\le 2d^{2}$. Moreover, let
\begin{eqnarray}
U&:=&\sum \limits _{s=0}^{d-1}\exp\left(i\frac {2\pi s}{d}\right)|f_{s}\rangle\langle f_{s}|,\nonumber\\
V_{k}&:=&\sum \limits _{s=0}^{d-1}\exp\left(i\frac {2\pi
s}{d}\right)|e_{s}^{k}\rangle\langle e_{s}^{k}|,\nonumber
\end{eqnarray}
be unitary operators in ${\mathcal H}$. We introduce
phase damping channels as follows
\begin{equation}\label{kanaly}
\Psi _{k}(\rho )=\left(1-\frac {d-1}{d}p\right)\rho +\frac {p}{d}\sum
\limits _{s=1}^{d-1}V_{k}^{s}\rho V_{k}^{s},
\nonumber
\end{equation}
with $\rho \in \mathfrak{S}({\mathcal H}),\ 1\le k\le 2d^{2}$.

Then, the quantum depolarizing $\Phi$ can be expressed in terms of the above phase damping channels as
\begin{eqnarray}
\Phi (\rho)&=&\frac {1-p}{1+(d-1)(1-p)}\frac {1}{2d}\sum
\limits _{k=1}^{2d^{2}}\Psi _{k}(\rho)\nonumber\\
&+&\frac {p}{1+(d-1)(1-p)}\frac {1}{2d^{3}}\sum \limits
_{j=1}^{d-1}\sum \limits _{k=1}^{2d^{2}}U^{j}\Psi _{k}(\rho
)U^{*j},\nonumber\\
\label {king}
\end{eqnarray}
with $\rho \in \mathfrak{S}({\mathcal H})$.
By defining
\begin{equation}\label {expecta}
E_{k}(\rho):=\frac {1}{d}\sum \limits _{s=0}^{d-1}V^{s}_{k}\rho
U^{*s}_{k},
\nonumber
\end{equation}
the conditional expectations on the algebras of fixed elements for the phase
dampings $\Psi _{k}$, we have
\begin {equation}\label {spur}
E_{k}(|f_{j}\rangle\langle f_{j}|)=\frac {1}{d}I_{{\mathcal H}},
\nonumber
\end {equation}
for $1\le k\le 2d^{2},\ 0\le j\le d-1$. This property guarantees
that the basis $\{|f_{j}\rangle\}$ is mutually unbiased with respect to all
the bases $\{|e^k\rangle\}$ defined by (\ref {bas}).
\bigskip


\begin{proof}[Proof Theorem \ref{dc}]
Let us take the optimal ensemble corresponding to the state $\rho $ such
that
$$
H_{\Phi \otimes \Omega}(\rho)=\sum \limits _{s}\pi _{s}S((\Phi
\otimes \Omega )(\rho _{s})).
$$
In the following we shall estimate $S((\Phi \otimes \Omega )(\rho
_{s}))$ for each fixed $s$.

Let us consider  for a while $\varrho$ instead of a $\rho_s$.
Let us pick up a unitary operator $T$ such that the state
\begin{equation}
\tilde \varrho =(T\otimes I_{{\mathcal K}})(Id\otimes \Omega )(\varrho
)(T^{*}\otimes I_{{\mathcal K}})), \nonumber
\end{equation}
satisfies the property
\begin {equation}\label {for}
E_{k}(\Tr_{{\mathcal K}}(\tilde \varrho ))=\frac {1}{d}I_{{\mathcal H}}.
\nonumber
\end {equation}

Using the covariance property $\Phi (\sigma )=T^{*}\Phi (T\sigma
T^{*})T$, taking place for all states $\sigma \in \mathfrak{S}({\mathcal H})$, we can rewrite the
decomposition (\ref {king}) as follows
\begin{eqnarray}
&&\Phi (\sigma)=\frac {1-p}{1+(d-1)(1-p)}\frac {1}{2d}\sum
\limits _{k=1}^{2d^{2}}\tilde \Psi _{k}(\sigma)\nonumber\\
&&+\frac {p}{1+(d-1)(1-p)}\frac {1}{2d^{3}}\sum \limits
_{j=1}^{d-1}\sum \limits _{k=1}^{2d^{2}}T^{*}U^{j}T\tilde \Psi
_{k}(\sigma )TU^{*j}T^{*},\nonumber\\
\label {king+}
\end{eqnarray}
where $\tilde \Psi _{k}(\sigma )=T^{*}\Psi _{k}(T\sigma T^{*})T$
are the phase damping channels with the property
\begin {equation}\label{ll}
\Tr(\tilde E_{k}(\Tr_{{\mathcal K}}(\varrho )))=\frac {1}{d}I_{{\mathcal H}}.
\end {equation}
Here $\tilde E_{k}(\sigma )=T^{*}E_{k}(T\sigma T^{*})T$, $\varrho
\in \mathfrak {S}({\mathcal H}\otimes {\mathcal K})$  and $\sigma
\in \mathfrak {S}({\mathcal H})$. The above equality guarantees that the state
$\varrho$ is unbiased with respect to all the orthonormal bases which
form the unitary operators determining the action of the phase
damping channels $\Psi _{k},\ 0\le k\le d-1$.

It follows from the decomposition (\ref {king+}) that
\begin {equation}\label {l}
S((\Phi \otimes \Omega )(\varrho ))\ge \frac {1}{d}\sum \limits
_{k=0}^{d-1}S((\tilde \Psi _{k}\otimes \Omega )(\varrho )). \nonumber
\end {equation}
Applying Theorem \ref {fff2} to each term of the sum in the right
hand side and taking into account that $-\sum \limits
_{j=0}^{d-1}\lambda _{j}\log\lambda _{j}=H_{\Phi}(\Tr_{{\mathcal
K}}(\varrho ))$ for $\lambda _{0}=1-\frac {d-1}{d}p,\ \lambda
_{j}=\frac {p}{d},\ 1\le j\le d-1$ due to (\ref{ll}), we get
\begin {equation}
S((\Phi \otimes \Omega )(\rho_s ))\ge
H_{\Phi}(\Tr_{{\mathcal K}}(\rho_s))+ H_{\Omega }(\Tr_{{\mathcal H}}(\rho_s)),
\nonumber
\end{equation}
hence the result of Theorem \ref{dc}. \hfill\end{proof}


\section {Conclusion}

By using the decreasing property of the relative entropy, we have
proved the strong superadditivity for a class of quantum channels.
This class includes the channels for which the property was
already shown by using other methods (thus giving an alternative
proof) as well as others channels (thus providing an extension of
the class). We guess that the decreasing property of the relative
entropy could be a powerful tool for a further extension of such
class of channels. More generally, it  could constitute a
universal method to investigate relevant properties of memoryless
quantum channels. In fact,  as a fall down of the strong
superadditivity property we get the additivity property. Thus for
our class of channels, the additivity results automatically
proved.

The perspective of a global proof of additivity through strong superadditivity seems fascinating
and motivate further investigations, especially in consideration of the limits of other methods \cite{Hay07}.

\section*{Acknowledgment}
We thank David Gross for useful remarks. The work of G.G. Amosov
is partially supported by INTAS grant Nr. 06-1000014-6077. The
work of S. Mancini is partially supported by the European
Commision under the Integrated Projects QAP and SCALA.


\end{document}